\numberwithin{equation}{section}
\newcommand{\diag}{{\rm diag}}
\newcommand{\be}{\begin{equation}}
\newcommand{\ee}{\end{equation}}
\newcommand{\bes}{\begin{equation*}}
\newcommand{\ees}{\end{equation*}}
\newcommand{\eqn}{\begin{eqnarray}}
\newcommand{\feqn}{\end{eqnarray}}
\newcommand{\eqnn}{\begin{eqnarray*}}
\newcommand{\feqnn}{\end{eqnarray*}}
\newtheorem{theorem}{Theorem}
\newtheorem{prop}{Proposition}
\makeatletter \@addtoreset{equation}{section} \makeatother
\newif\ifpdf \ifx\pdfoutput\undefined \pdffalse
\begin{document}

\ifpdf\DeclareGraphicsExtensions{.pdf, .jpg, .tif} \else%
\DeclareGraphicsExtensions{.eps, .jpg} \fi
\begin{titlepage}

    \thispagestyle{empty}
    \begin{flushright}
        \hfill{SU-ITP-10/10}\\
    \end{flushright}

    \vspace{30pt}
    \begin{center}
        { \Huge{\bf Iwasawa $\mathcal{%
N}=8$ Attractors}}

        \vspace{30pt}

        {\large{\bf Sergio L. Cacciatori$^\clubsuit$, Bianca L. Cerchiai$^{\diamondsuit}$, and \ Alessio Marrani$^{\heartsuit}$}}

        \vspace{40pt}

        {$\clubsuit$ \it Dipartimento di Fisica e Matematica,\\Universit\`a degli Studi dell'Insubria,
Via Valleggio 11, 22100 Como, Italy\\
and INFN, Sezione di Milano, Via Celoria 16, 20133 Milano, Italy
\texttt{sergio.cacciatori@uninsubria.it}}

        \vspace{5pt}

        {$\diamondsuit$ \it Dipartimento di Matematica,\\
Universit\`a degli Studi di Milano,  Via Saldini 50, 20133 Milano,
Italy\\
\texttt{bianca.cerchiai@unimi.it}}

        \vspace{5pt}

        {$\heartsuit$ \it Stanford Institute for Theoretical Physics\\
        Department of Physics, 382 Via Pueblo Mall, Varian Lab,\\
        Stanford University, Stanford, CA 94305-4060, USA\\
        \texttt{marrani@lnf.infn.it}}

        \vspace{55pt}
\end{center}

\vspace{5pt}

\begin{abstract}
Starting from the symplectic construction of the Lie algebra $\frak{e}%
_{7\left( 7\right) }$ due to Adams, we consider an Iwasawa
parametrization of the coset $\frac{E_{7\left( 7\right) }}{SU\left(
8\right) }$, which is the scalar manifold of $\mathcal{N}=8$, $d=4$
supergravity. Our approach, and the manifest \textit{off-shell}
symmetry of the resulting symplectic frame, is determined by a
non-compact Cartan subalgebra of the maximal subgroup
$SL(8,\mathbb{R})$ of $E_{7\left( 7\right) }$.

In absence of gauging, we utilize the explicit expression of the Lie algebra to study the origin of $%
\frac{E_{7\left( 7\right) }}{SU\left( 8\right) }$ as scalar
configuration of a $\frac{1}{8}$-BPS extremal black hole attractor.
In such a framework, we highlight the action of a $U\left( 1\right) $ symmetry spanning the dyonic $%
\frac{1}{8}$-BPS attractors. Within a suitable supersymmetry
truncation allowing for the embedding of the Reissner-N\"{o}rdstrom
black hole, this $U\left( 1\right)$ is interpreted as nothing but the global $\mathcal{R}$%
-symmetry of \textit{pure} $\mathcal{N}=2$ supergravity.

Moreover, we find that the above mentioned $U(1)$ symmetry is broken
down to a discrete subgroup $\mathbb{Z}_{4}$, implying that all
$\frac{1}{8}$-BPS Iwasawa attractors are non-dyonic near the origin
of the scalar manifold. We can trace this phenomenon back to the
fact that the Cartan subalgebra of $SL(8,\mathbb{R})$  used in our
construction endows the symplectic frame with a manifest
\textit{off-shell} covariance which is smaller than
$SL(8,\mathbb{R})$ itself. Thus, the consistence of the
Adams-Iwasawa symplectic basis with the
action of the $U\left( 1\right) $ symmetry gives rise to the observed $%
\mathbb{Z}_{4}$ residual non-dyonic symmetry.
\end{abstract}

\end{titlepage}
\newpage\tableofcontents

\section{\label{Intro}Introduction}

Local supersymmetry with $\mathcal{N}=8$ supercharge spinor generators is
the maximal one realized by a Lagrangian field theory with spin $s\leqslant
2 $ in $d=4$ space-time dimensions \cite{CJ,dWN}. No matter coupling is
allowed, and the bosonic content of the unique gravity supermultiplet is
given, besides the \textit{Vielbein}, by $28$ Abelian vector fields and $70$
real scalar fields. These latter coordinatize the symmetric coset
\begin{equation}
M_{\mathcal{N}=8,d=4}=\frac{E_{7\left( 7\right) }}{SU\left( 8\right) },
\label{scalar-manifold}
\end{equation}
where $E_{7\left( 7\right) }$ is the $U$-duality group \cite{HT} and $%
SU\left( 8\right) $ is its maximal compact subgroup. The $70$ real scalar
fields $\phi _{\left[ ijkl\right] }$ sit in the rank-$4$ completely
antisymmetric irrepr. $\mathbf{70}$ of $SU\left( 8\right) $ ($i,j=1,..,8$,
in the fund. irrepr. $\mathbf{8}$ of $SU(8)$). On the other hand, the
two-form Maxwell field strengths and their duals carry a symplectic index $%
\mathbb{A}$ sitting in the fundamental irrepr. $\mathbf{56}$ of $E_{7\left(
7\right) }$, which define the symplectic embedding of the $U$-duality
through the Gaillard-Zumino procedure \cite{GZ} (see also \textit{e.g.} \cite
{ADF-U-duality-revisited}). Thus, the fluxes of the two-form Maxwell field
strengths define the dyonic charge vector $\mathbf{Q}^{\mathbb{A}}$, which
then splits into electric and magnetic charges in a manifestly $SU\left(
8\right) $-covariant fashion as follows:
\begin{equation}
\mathbf{Q}^{\mathbb{A}}=\left( q_{ij},p^{ij}\right) ,  \label{charge-vector}
\end{equation}
where antisymmetrization is understood in the pairs of $SU\left( 8\right) $%
-indices.

$\mathcal{N}=8$ supersymmetry constrains the theory in a remarkably peculiar
way, which recently turned out to exhibit exceptional features. Indeed,
apart from being studied as a candidate for the simplest quantum field theory
\cite{simplest}, $\mathcal{N}=8$, $d=4$ supergravity has been shown to have
unexpected convergent ultraviolet properties, explicitly computed until four
loops in perturbation theory \cite{UV-finite-until-4-loops-incl}.

In absence of gauging, asymptotically flat, static, spherically symmetric,
dyonic, extremal (\textit{i.e.} zero temperature) black holes (BHs), with
various degrees of BPS-saturation, emerge as classical solutions of the
non-linear Einstein equations. According to \cite{GT}, these BHs can be seen as smooth solitonic $p=0$-branes, interpolating between two maximally supersymmetric $d=4$ geometries, namely Minkowski space at spatial infinity and conformally flat $AdS_{2}\times S^{2}$ Bertotti-Robinson \cite{BR} near-horizon geometry.

At spatial infinity, such BHs are characterized by their ADM mass \cite{ADM}%
, depending both on $\mathbf{Q}^{\mathbb{A}}$ and on the asymptotical
unconstrained values $\phi _{\left[ ijkl\right] \infty }$ of the scalar
fields. The area $A_{H}$ of the BH event horizon, and thus, through the
Bekenstein-Hawking formula \cite{BH}, the BH entropy $S_{BH}$, is given
purely in terms $\mathbf{Q}^{\mathbb{A}}$, thanks to the \textit{Attractor
Mechanism} \cite{FKS}-\nocite{Strom,FK1,FK2}\cite{FGK} \cite{Kol}:
\begin{equation}
\frac{S_{BH}}{\pi }=\frac{A_{H}}{4\pi }=\sqrt{\left| \mathcal{I}_{4}\right| }%
,
\end{equation}
where $\mathcal{I}_{4}$ is the unique quartic Cartan invariant \cite{I_4} of
$E_{7\left( 7\right) }$, defined in terms of the rank-$4$ completely
symmetric invariant tensor $K_{\left( \mathbb{ABCD}\right) }$ of the $%
\mathbf{56}$ of $E_{7\left( 7\right) }$ as follows:
\begin{equation}
\mathcal{I}_{4}\equiv K_{\mathbb{ABCD}}\mathbf{Q}^{\mathbb{A}}\mathbf{Q}^{%
\mathbb{B}}\mathbf{Q}^{\mathbb{C}}\mathbf{Q}^{\mathbb{D}}.
\end{equation}

Following the general analysis \cite{Ferrara-Maldacena,FG,LPS-1,ADFL-0-brane} of
$E_{7\left( 7\right) }$-invariant BPS conditions for the various classes of
BH states, as well as of the corresponding charge orbits of the $\mathbf{56}$
of $E_{7\left( 7\right) }$, extremal BH attractors in $\mathcal{N}=8$, $d=4$
supergravity were studied in \cite{FK-N=8,Gimon} (see also \cite{ADFFT,ACC},
as well as the recent treatment in \cite{CFMZ}), by solving the criticality
conditions \cite{FGK} for the effective BH potential
\begin{equation}
V_{BH}\equiv \frac{1}{2}Z_{ij}\overline{Z}^{ij},  \label{V_BH-def}
\end{equation}
where $Z_{ij}$ is the $\mathcal{N}=8$, $d=4$ complex antisymmetric central
charge matrix (see \textit{e.g.} \cite
{Ferrara-Maldacena,ADF-U-duality-revisited}, and Refs. therein). Then, in
\cite{CFGM1} some simple configurations were considered, corresponding to
the well-known typologies of Reissner-N\"{o}rdstrom, Kaluza-Klein and
axion-dilaton BHs. Through suitable branching decompositions of the relevant
irreprs. of the $U$-duality group $E_{7\left( 7\right) }$, these well-known
solutions were shown to be embedded in maximal $d=4$ supergravity. Such an
analysis has been further developed in \cite{CFGn-1}, where the relations
between extremal $d=4$ BHs and extremal $d=5$ BHs and black strings have
been studied, by exploiting the connection between $E_{7\left( 7\right) }$
and the $d=5$ $U$-duality group $E_{6\left( 6\right) }$. To this aim, $%
\mathcal{N}=8$, $d=4$ supergravity has been formulated in a manifestly $%
E_{6\left( 6\right) }$-covariant basis \cite{ADFL-gauging-flat}, namely the
one related to the Sezgin-Van Nieuwenhuizen $d=5\rightarrow d=4$ dimensional
reduction \cite{SVN}. This is not the same as the Cremmer-Julia \cite{CJ} or
de Wit-Nicolai \cite{dWN} symplectic frame, whose maximal non-compact
\textit{off-shell} symmetry is $SL\left( 8,\mathbb{R}\right) $. The relation
between these two formulations, usually adopted to study $d=4$ maximal
supergravity in absence of gauging, has been precisely discussed in \cite
{CFGM1}, and it amounts to dualizing several vector fields and therefore to
interchanging the electric and magnetic charges of some of the $28$ Abelian
vector fields of the theory.

Furthermore, extremal BH attractors provide an interesting arena, in which
the above mentioned issues of ultraviolet convergence of perturbative
quantum field theory computations (leading to the conjecture of ultra-violet
finiteness of $\mathcal{N}=8$, $d=4$ supergravity) have been recently
investigated (see \cite{BFK-1,BFK-2}, and Refs. therein; see also \cite
{Sen-Arithmetic-N=8,ICL-1}).\medskip

\textit{\c{C}a va sans dire}, the Sezgin-Van Nieuwenhuizen \cite{SVN} and
Cremmer-Julia \cite{CJ} or de Wit-Nicolai \cite{dWN} symplectic frames are
not the only ones in which $\mathcal{N}=8$, $d=4$ supergravity can be
formulated. Apart from the bases related to the various possible gaugings of
the theory (see \textit{e.g.} \cite{Duff,HW,Gauging-TO-1}, and Refs.
therein), other ungauged formulations can be considered, and they can be
useful to unveil some interesting facets of the theory itself.

Hinted by Adam's approach to the Lie algebra $\frak{e}_{7\left( 7\right) }$
\cite{adams}, in this paper we explicitly perform an Iwasawa parametrization
of the coset representative of the symmetric manifold $\frac{E_{7\left(
7\right) }}{SU\left( 8\right) }$. The main feature of such a construction is
the use of a completely non-compact $7$-dimensional Cartan subalgebra of $%
SL\left( 8,\mathbb{R}\right) $, which leads to the nilpotency of the matrix
realization of the relevant coset generators, determining the maximal
manifest covariance of the whole framework to be $%
SL\left( 7,\mathbb{R}\right) $. Considering the expression of the
coset (\ref{scalar-manifold}) to the first order, i.e. at the Lie algebra level, we then study the $SU\left( 8\right) $%
-invariant origin of such a manifold as a $\frac{1}{8}$-BPS scalar
configuration corresponding to an extremal BH attractor \cite{FK-N=8,CFGM1}.

Within such an approach to $\frac{1}{8}$-BPS attractors, we remark the
existence of a \textit{residual ``degeneracy symmetry''} $U\left( 1\right) $%
. This symmetry is \textit{residual}, because it characterizes the (particular
representative of the) orbit of charge configurations which support $\frac{1}{8}$-BPS attractors. Furthermore, it is a \textit{%
``degeneracy symmetry''} because it spans the dyonic nature of the $\frac{1}{%
8}$-BPS solutions exhibiting an \textit{Attractor Mechanism}. As also
pointed in the analysis of \cite{CFGM1}, this symmetry is decompactified
to $SO\left( 1,1\right) $ in non-BPS attractors, thus not allowing for the
origin of $\frac{E_{7\left( 7\right) }}{SU\left( 8\right) }$ to constitute a
representative of non-BPS $\mathcal{N}=8$ attractor scalar configurations.

The main result of the present investigation is the discovery that such a $%
U\left( 1\right) $ symmetry, characterizing the $\frac{1}{8}$-BPS attractors
in $\frac{E_{7\left( 7\right) }}{SU\left( 8\right) }$,
actually gets spoiled within the coset construction \textit{\`{a} la
Adams-Iwasawa} performed in the paper. Indeed, $U\left( 1\right) $ is
broken down to a discrete $\mathbb{Z}_{4}$ subgroup, as it appears from the
\textit{purely electric} or \textit{purely magnetic} nature of the solutions
to the set of Attractor Equations governing the near-horizon dynamics of the
scalar fields. By analyzing such a $U\left( 1\right) \rightarrow \mathbb{Z}%
_{4}$ breaking in detail, we are able to trace its origin back to the
maximal manifest \textit{off-shell} covariance properties of the construction, i.e., to the choice of a $7$-dimensional completely non-compact Cartan subalgebra of $%
SL\left( 8,\mathbb{R}\right) $, which breaks the maximal manifest \textit{%
off-shell} covariance down to $SL\left( 7,\mathbb{R}\right) $, or, through a
suitable Cayley rotation, to $SU\left( 7\right) $.

Thus, our investigation points out that the dyonic nature of $\frac{1}{8}$%
-BPS extremal BH attractor in \textit{ungauged} $\mathcal{N}=8$, $d=4$
supergravity essentially relies on the covariance properties exhibited by
the parametrization chosen for the scalar manifold (\ref{scalar-manifold})
itself. As explained in the concluding Sect. \ref{Conclusion}, each of the
symplectic frames mentioned above is ``natural'' in order to explicit the
maximal symmetry of a class of attractors. In this perspective, the
Lie-algebra approach to the Adams-Iwasawa construction of $\frac{E_{7\left( 7\right) }}{%
SU\left( 8\right) }$ studied in the present paper highlights the action of a
$U\left( 1\right) $ symmetry in the dyonic attractor solutions pertaining to
$\frac{1}{8}$-BPS BH states, and its breaking to a discrete subgroup.

It should be remarked that, in light of the embedding analysis performed in
\cite{CFGM1}, the Lie algebra limit of $\frac{1}{8}$-BPS attractors is related to the embedding of the Reissner-N\"{o}rdstrom extremal BH solution of \textit{%
pure} $\mathcal{N}=2$, $d=4$ supergravity into a $\mathcal{N}=8$ maximal
theory. In this framework, the \textit{residual} $U(1)$ \textit{``degeneracy
symmetry''} mentioned above is nothing but the $U\left( 1\right) $ global $%
\mathcal{R}$-symmetry of \textit{pure} $\mathcal{N}=2$, $d=4$ theory%
\footnote{%
Indeed, in absence of scalars the $\mathcal{R}$-symmetry, usually contained
in the stabilizer of the scalar manifold, gets promoted to a global ($U$%
-duality) symmetry \cite{FSZ}.} \cite{GH}. In this context, the
$U\left( 1\right) \rightarrow \mathbb{Z}_{4}$ breaking due to the
constraints on manifest covariance imposed by the Adams-Iwasawa construction, can be interpreted as a breaking of the $%
\mathcal{R}$-symmetry of the \textit{pure} $\mathcal{N}=2$, $d=4$ theory, to
which $\mathcal{N}=8$, $d=4$ supergravity gets effectively truncated in the
sector of $\frac{1}{8}$-BPS attractors near the origin of (\ref
{scalar-manifold}).\bigskip\

The plan of the paper is as follows.

Sect. \ref{Adams-Iwasawa} is devoted to a detailed construction of the coset
representative of the $\mathcal{N}=8$, $d=4$ scalar manifold (\ref
{scalar-manifold}), by exploiting Adams' realization \cite{adams} of the Lie
algebra of the $U$-duality group $E_{7\left( 7\right) }$ (Subsect. \ref
{Lie-Algebra}). Then, using a completely non-compact $7$-dimensional Cartan
subalgebra of $SL\left( 8,\mathbb{R}\right) $ as a pivot, in Subsect. \ref
{Iwasawa} a parametrization \textit{\`{a} la Iwasawa} of the coset
representative is worked out.

Then, Sect. \ref{N=8,d=4-sugra} deals with the formulation of $\mathcal{N}=8$%
, $d=4$ ungauged supergravity theory within such a symplectic frame,
computing the central charge matrix $Z_{ij}$ and the effective BH potential $%
V_{BH}$ in terms of the symplectic electric and magnetic sections.

At the Lie algebra level, exploring the \textit{Attractor Mechanism} in the
neighbourhood of the origin of the scalar manifold (\ref{scalar-manifold})
itself, the $\frac{1}{8}$-BPS attractor solutions are studied in Sect. \ref
{0=1/8-BPS-large}. {From} the analysis performed in Subsect. \ref
{Iwa}, only purely electric or purely magnetic Iwasawa solutions are
obtained. The \textit{non-dyonic} nature of such attractors is then
investigated in Subsect. \ref{Breaking}, in which it is found that such a
phenomenon is due to the breaking of the \textit{residual ``degeneracy
symmetry''} $U\left( 1\right) $ down to a subgroup $\mathbb{Z}_{4}$.

Concerning the $d=5$ uplift properties of maximal $d=4$ supergravity, Sect.
\ref{d=4-d=5} reports some relations between scalar manifolds and \textit{%
moduli spaces} of attractors, with some new observations related to the $c$%
-map \cite{CFG} and thus to $d=3$ (non-maximal) theories, hinting to
further developments, also in view of recent advances in the field (see
\textit{e.g.} \cite{d=3-recent}).

Some final remarks and comments are given in concluding Sect. \ref
{Conclusion}.

\section{\label{Adams-Iwasawa}Adams-Iwasawa Approach to $\frak{e}_{7(7)}/%
\frak{su}(8)$}

\subsection{\label{Lie-Algebra}Adams' Symplectic Construction of $\frak{e}%
_{7(7)}$}

In order to obtain a direct construction of the maximally non-compact
exceptional Lie algebra $\frak{e}_{7(7)}$, we follow Chapter 12 of \cite
{adams}.

Let $V$ be an $8$-dimensional real vector space and $V^{\ast }$ its dual.
The notation $\Lambda ^{i}V$ denotes the $i$-th external power of $V$. By
exploiting the isomorphism $\Lambda ^{8}V\simeq \mathbb{R}$, one can then
define $SL(V)$ as the group of automorphisms preserving such isomorphism.
Then, the Lie algebra of $SL(V)$ itself can be defined:
\begin{equation}
L\equiv \frak{sl}(V).  \label{L-def}
\end{equation}
$L$ acts on the $56$-dimensional real vector space
\begin{equation}
W\equiv \Lambda ^{2}V\oplus \Lambda ^{2}V^{\ast }  \label{W-decomp}
\end{equation}
in the usual way, namely:
\begin{equation}
L(W)=L(V)\wedge V\oplus L(V^{\ast })\wedge V^{\ast }+V\wedge L(V)\oplus
V^{\ast }\wedge L(V^{\ast }),  \label{L-decomp-1}
\end{equation}
where $L(V^{\ast })$ denotes the adjoint action.\newline
If $i+j=8$, the pairing
\begin{equation}
\Lambda ^{i}V\otimes \Lambda ^{j}V\longrightarrow \Lambda ^{8}V\simeq
\mathbb{R},  \label{pairing}
\end{equation}
given by the wedge product $\wedge $, defines an isomorphism
\begin{equation}
\Lambda ^{i}V\simeq \Lambda ^{j}V^{\ast }.  \label{iso-1}
\end{equation}
Such an isomorphism can then be used to define an action of
\begin{equation}
\lambda ^{4}\equiv \Lambda ^{4}V  \label{lambda^4}
\end{equation}
(with $dim_{\mathbb{R}}=\binom{8}{4}=70$) on $W$ by means of the maps:
\begin{eqnarray}
&&\lambda ^{4}\otimes \Lambda ^{2}V\overset{\wedge }{\longrightarrow }%
\Lambda ^{6}V\simeq \Lambda ^{2}V^{\ast };  \label{actionV} \\
&&  \notag \\
&&\lambda ^{4}\otimes \Lambda ^{2}V^{\ast }\simeq \Lambda ^{4}V^{\ast
}\otimes \Lambda ^{2}V^{\ast }\overset{\wedge }{\longrightarrow }\Lambda
^{6}V^{\ast }\simeq \Lambda ^{2}V.  \label{actionV*}
\end{eqnarray}
Thus, it follows that
\begin{equation}
A\equiv L\oplus \lambda ^{4}  \label{A-deff}
\end{equation}
is a $133$-dimensional real vector space of operators acting on $W$.

The following Theorem holds (cfr. Theorem 12.1, as well as the end of
Chapter 12, of \cite{adams}):

\begin{theorem}
$A$ is a Lie algebra of maps which acts on $W$ in the same way as $\frak{e}%
_{7(7)}$ acts on its fundamental irrepr. $\mathbf{56}$, up to isomorphisms.
\end{theorem}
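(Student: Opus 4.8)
The plan is to realize the whole package $A=L\oplus\lambda^{4}$ inside $\mathfrak{gl}(W)$ and let the associative structure there do the work. First I would use the maps \eqref{L-decomp-1}, \eqref{actionV} and \eqref{actionV*} to view $A$ as a linear subspace of $\mathfrak{gl}(W)$, and check that this identification is injective: the $L$-action on $W$ is faithful because $W$ contains $\Lambda^{2}V$ and $\mathfrak{sl}(V)$ already acts faithfully on $\Lambda^{2}V$ for $\dim V=8$; the $\lambda^{4}$-action is faithful because a $4$-form annihilating all of $\Lambda^{2}V$ under the wedge map \eqref{actionV} must vanish (the map $\Lambda^{4}V\to\mathrm{Hom}(\Lambda^{2}V,\Lambda^{6}V)$ is injective). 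Hence $\dim_{\mathbb{R}}A=63+70=133$, matching $\dim\mathfrak{e}_{7(7)}$, and $A$ carries the candidate bracket given by the commutator of operators on $W$. With this choice antisymmetry and the Jacobi identity are automatic, so the only thing left to prove is closure: $[A,A]\subseteq A$.

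The closure splits into three cases. The inclusion $[L,L]\subseteq L$ is standard. The inclusion $[L,\lambda^{4}]\subseteq\lambda^{4}$ holds because, as a subspace of $\mathfrak{gl}(W)$, $\lambda^{4}$ is stable under the adjoint action of $L$: by naturality of the wedge pairings \eqref{actionV}--\eqref{actionV*}, this adjoint action is intertwined with the natural $\mathfrak{sl}(V)$-action on $\Lambda^{4}V$, so $[L,\lambda^{4}]$ lands back in the copy of $\Lambda^{4}V$. The substantial case is $[\lambda^{4},\lambda^{4}]\subseteq L$: for $\phi,\psi\in\Lambda^{4}V$ one writes the operator $\phi\psi-\psi\phi$ on $\Lambda^{2}V\oplus\Lambda^{2}V^{\ast}$ explicitly via \eqref{actionV}--\eqref{actionV*} and checks (i) that it preserves each summand and acts on $\Lambda^{2}V^{\ast}$ as minus the transpose of its action on $\Lambda^{2}V$, which is exactly the shape of an element of $L(W)$ displayed in \eqref{L-decomp-1}, and (ii) that the induced endomorphism of $V$ is traceless. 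In index notation, $\phi$ acts by $x^{pq}\mapsto\epsilon^{ijklpqrs}\phi_{ijkl}x_{rs}$ (and dually on $\Lambda^{2}V^{\ast}$), and after contracting the two $\epsilon$-tensors the commutator collapses to a diagonal contraction of the form $\phi_{iabc}\psi^{jabc}-\tfrac{1}{8}\delta_{i}^{\,j}\,\phi_{abcd}\psi^{abcd}$ acting in the same way on both blocks. This identity-chasing is the step I expect to be the main obstacle, precisely because it is where the value $\dim V=8$ and the self-pairing $\Lambda^{4}V\simeq\Lambda^{4}V^{\ast}$ are essential; one has to be careful with the normalization constants so that the two blocks genuinely glue into a single $\mathfrak{sl}(V)$-element rather than into a $\mathfrak{gl}(V)$-element or an inhomogeneous operator.

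Finally I would identify $A$ with $\mathfrak{e}_{7(7)}$. From the previous step $A$ is a $133$-dimensional Lie algebra with $A=L\oplus\lambda^{4}$, $L\simeq\mathfrak{sl}(8,\mathbb{R})$ a subalgebra, $\lambda^{4}\simeq\Lambda^{4}V$ an irreducible $L$-module (the $\mathbf{70}$), and $[\lambda^{4},\lambda^{4}]\neq0$; this forces $A$ to be simple, since a nonzero ideal would either meet $L$ (and then equal $A$ by simplicity of $L$ and irreducibility of $\lambda^{4}$) or contain $\lambda^{4}$ and hence $[\lambda^{4},\lambda^{4}]\subseteq L$. The branchings $\mathbf{133}\to\mathbf{63}\oplus\mathbf{70}$ and $\mathbf{56}\to\mathbf{28}\oplus\overline{\mathbf{28}}$ under $\mathfrak{sl}(8)$ coincide with the known ones for $E_{7}\supset SL(8)$, and by the classification of simple Lie algebras a $133$-dimensional simple Lie algebra admitting a faithful $56$-dimensional representation must be of type $E_{7}$. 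To pin down the real form, observe that $L$ contains a $7$-dimensional Cartan subalgebra of $A$ consisting of traceless real diagonal matrices, on which every root is real-valued, so $A$ is the split form $\mathfrak{e}_{7(7)}$ (equivalently, compute the signature of the Killing form). Since $\mathbf{56}$ is irreducible, Schur's lemma then makes the action of $A$ on $W$ coincide, up to isomorphism, with the fundamental $\mathbf{56}$ of $E_{7(7)}$, which is the assertion. A shorter alternative for this last paragraph is simply to invoke Theorem 12.1 of \cite{adams} once the ingredients above have been assembled.
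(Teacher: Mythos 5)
The paper gives no proof of this statement: it is stated as quoted from Adams' lectures (Theorem~12.1 and the end of Chapter~12 of \cite{adams}), and the authors simply defer to that reference, as you yourself note at the end. Your reconstruction follows the natural route that one finds in Adams: view $A=L\oplus\lambda^{4}$ inside $\mathfrak{gl}(W)$, check closure under the commutator bracket (with $[\lambda^{4},\lambda^{4}]\subseteq L$ being the nontrivial $\varepsilon$-tensor contraction that uses $\dim V=8$), then recognize the resulting $133$-dimensional Lie algebra as the split form $\mathfrak{e}_{7(7)}$ from its split Cartan and the branching under $\mathfrak{sl}(8)$. Two small points if you were to flesh this out: the index placement in your formula for the action of $\phi$ should be handled via the $\Lambda^{8}V\simeq\mathbb{R}$ isomorphism rather than a metric, since $V$ has no preferred inner product here; and the ``or contain $\lambda^{4}$'' branch of your simplicity argument implicitly uses that $\lambda^{4}\cong\mathbf{70}$ and the adjoint $\mathbf{63}$ of $\mathfrak{sl}(8)$ are non-isomorphic (so the graph of a putative $L$-map $\lambda^{4}\to L$ must be trivial by Schur), a step worth making explicit.
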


Then, $A$ is a realization of the Lie algebra $\frak{e}_{7(7)}$ with
irreducible representation $(A,W)$. Up to isomorphisms, this is indeed the
smallest faithful representation of $\frak{e}_{7(7)}$.


\subsubsection{\label{Matrix}Matrix Realization}

After identifying $V$ with $\mathbb{R}^{8}$, the action of $L$ on $V$ is
generated by the action of the traceless $8\times 8$ matrices in $M(8,%
\mathbb{R})$. One can then choose a basis $\{e_{i}\}_{i=1,...,8}$ of $V$ and
a basis $\{A_{kl},S_{kl},D_{\alpha }\}$ (with cardinality $8\times 8-1=63$)
for $M(8,\mathbb{R})$, defined as follows\footnote{%
Throughout the whole treatment, as usual, the square brackets denote
antisymmetrization of enclosed indices, according to the definition $%
A_{[kl]}\equiv \frac{1}{2}(A_{kl}-A_{lk})$, while the round brackets
indicate symmetrization of enclosed indices: $S_{(kl)}\equiv \frac{1}{2}%
(S_{kl}+S_{lk})$.
\par
It is worth remarking that often the symmetry properties are used to
introduce an ordering rule, and to restrict the range of the indices.} ($%
1\leq k<l\leq 8$, and $\alpha =1,\ldots ,7$):
\begin{eqnarray}
A_{kl}e_{i} &\equiv &\delta _{li}e_{k}-\delta _{ki}e_{l}=A_{[kl]}e_{i};
\label{A-antisymm-def} \\
S_{kl}e_{i} &\equiv &\delta _{li}e_{k}+\delta _{ki}e_{l}=S_{(kl)}e_{i};
\label{S-symm-def} \\
D_{\alpha } &\equiv &\text{diag}\{D_{\alpha }^{1},\ldots ,D_{\alpha
}^{8}\};~~Tr\left( D_{\alpha }\right) =0.  \label{D-Tr}
\end{eqnarray}
Thus, $A_{kl}$'s and $S_{kl}$'s respectively are $28$ antisymmetric and $28$
symmetric\footnote{%
Notice that, despite their traceless symmetry, $S_{kl}$'s are only $28$ (and
not $35$), because the index ordering $k<l$ has been enforced. The $8-1=7$
traceless diagonal (\textit{i.e.} $k=l$) degrees of freedom of $S_{kl}$'s
are implemented through the $D_{\alpha }$'s.} $8\times 8$ matrices, whereas $%
D_{\alpha }$'s are $7$ diagonal traceless $8\times 8$ matrices, which can be
identified with the Cartan subalgebra of $\frak{e}_{7(7)}$ (see further
below). Their normalization is chosen such that
\begin{eqnarray}
Tr\left( A_{kl}A_{mn}\right) &\equiv &-2\delta _{kl\mid mn}; \\
Tr\left( S_{kl}S_{mn}\right) &\equiv &2\delta _{kl\mid mn}; \\
Tr\left( D_{\alpha }D_{\beta }\right) &\equiv &2\delta _{\alpha \beta }.
\end{eqnarray}

It is now possible to extend the action of $L$ on $V$ (defined above) to $%
\Lambda ^{2}V$, and next to $W$. To this aim, let us introduce $\left\{
e_{ij}\right\} _{i<j}\equiv e_{i}\wedge e_{j}$ as basis for $\Lambda ^{2}V$,
and denote its dual by $\{\varepsilon ^{ij}\}_{i<j}$. Thus, one reaches the
following results\footnote{%
In these sums we do not restrict $m<n$, rather we take into account that $%
e_{mn}=-e_{nm}$. It similarly holds for Eqs. (\ref{1})-(\ref{3}).}:
\begin{eqnarray}
&&A_{kl}(e_{ij})=\sum_{m,n}(U_{klim}^{A}D_{kljn}+D_{klim}U_{kljn}^{A})e_{mn};
\label{PPPA-1} \\
&&S_{kl}(e_{ij})=\sum_{m,n}(U_{klim}^{S}D_{kljn}+D_{klim}U_{kljn}^{S})e_{mn};
\label{PPPA-2} \\
&&D_{\alpha }(e_{ij})=(D_{\alpha }^{i}+D_{\alpha }^{j})e_{ij},
\label{PPPA-3}
\end{eqnarray}
where the quantities
\begin{eqnarray}
U_{klim}^{A} &\equiv &\delta _{km}\delta _{li}-\delta _{ki}\delta _{lm};
\label{PPPA-4} \\
U_{klim}^{S} &\equiv &\delta _{km}\delta _{li}+\delta _{ki}\delta _{lm}; \\
D_{klim} &\equiv &\left\{
\begin{array}{l}
\delta _{im}~\text{for}~k\neq l\neq i; \\
0~\text{otherwise}.
\end{array}
\right. ~  \label{PPPA-6}
\end{eqnarray}
have been introduced. On the dual basis $\varepsilon ^{ij}$, $M(8,\mathbb{R}%
) $ acts as minus the transposed matrix:
\begin{eqnarray}
&&A_{kl}(\varepsilon
^{ij})=\sum_{m,n}(U_{klim}^{A}D_{kljn}+D_{klim}U_{kljn}^{A})\varepsilon
^{mn};  \label{1} \\
&&S_{kl}(\varepsilon
^{ij})=-\sum_{m,n}(U_{klim}^{S}D_{kljn}+D_{klim}U_{kljn}^{S})\varepsilon
^{mn};  \label{2} \\
&&D_{\alpha }(\varepsilon ^{ij})=-(D_{\alpha }^{i}+D_{\alpha
}^{j})\varepsilon ^{ij}.  \label{3}
\end{eqnarray}
Equations (\ref{1}), (\ref{PPPA-1}) and (\ref{2}), (\ref{PPPA-2}) define the
$56 \times 56$ matrices representing the action on $W$ of the operators $%
A_{kl}$ and $S_{kl}$ respectively. We will keep the names $A_{kl}$ and $%
S_{kl}$ for such matrices. Similarly, (\ref{3}) and (\ref{PPPA-3}) define
the $56 \times 56$ matrices $h_{D_\alpha}$ corresponding to the diagonal
matrices $D_\alpha$.

In order to determine the remaining $70$ generators of $\frak{e}_{7(7)}$
(which span $\lambda ^{4}$ defined by (\ref{lambda^4})), we consider the
action of
\begin{equation}
\underset{\left( i_{1}<i_{2}<i_{3}<i_{4}\right) }{\lambda
_{i_{1}i_{2}i_{3}i_{4}}}\equiv e_{i_{1}}\wedge e_{i_{2}}\wedge
e_{i_{3}}\wedge e_{i_{4}}
\end{equation}
on $W$. By exploiting the identifications (\ref{actionV}) and (\ref{actionV*}%
), this yields to
\begin{eqnarray}
&&(e_{i_{1}}\wedge e_{i_{2}}\wedge e_{i_{3}}\wedge e_{i_{4}})\otimes
(e_{j_{1}j_{2}})\mapsto \frac{1}{2}\epsilon
_{i_{1}i_{2}i_{3}i_{4}j_{1}j_{2}k_{1}k_{2}}\varepsilon ^{k_{1}k_{2}}; \\
&&(e_{i_{1}}\wedge e_{i_{2}}\wedge e_{i_{3}}\wedge e_{i_{4}})\otimes
(\varepsilon ^{j_{1}j_{2}})\mapsto \frac{1}{2}\delta
_{i_{1}i_{2}i_{3}i_{4}}^{j_{1}j_{2}k_{1}k_{2}}e_{k_{1}k_{2}},
\end{eqnarray}
where $\epsilon $ is the standard $8$-dimensional Levi-Civita tensor.
Furthermore
\begin{equation}
\delta _{i_{1}i_{2}i_{3}i_{4}}^{j_{1}j_{2}j_{3}j_{4}}\equiv \sum_{\sigma \in
{\mathcal{P}}[1,2,3,4]}\epsilon _{\sigma }\delta _{i_{\sigma
(1)}}^{j_{1}}\delta _{i_{\sigma (2)}}^{j_{2}}\delta _{i_{\sigma
(3)}}^{j_{3}}\delta _{i_{\sigma (4)}}^{j_{4}},
\end{equation}
where ${\mathcal{P}}[1,2,3,4]$ denotes the set of permutations of $[1,2,3,4]$%
, and $\epsilon _{\sigma }$ stands for the parity of permutation $\sigma $.

Within the basis $e\equiv e_{ij}$ ($i<j$), it is convenient to use a
double-index notation for matrices, such that \textit{e.g.} the action of
the matrix $M$ on $e$ reads:
\begin{equation}
(Me)^{ij}=\sum_{k<l}M^{ij\mid kl}e_{kl}.
\end{equation}
Thus, the action of $\lambda _{i_{1}i_{2}i_{3}i_{4}}$, written in block
matrix form with respect to the decomposition (\ref{W-decomp}), reads%
\footnote{%
In (\ref{MI-1}) subscripts ``$u$'' and ``$d$'' simply stand for ``up''
respectively ``down'', referring to the position of the block matrices
within $\lambda _{i_{1}i_{2}i_{3}i_{4}}$ (see also Eq. (\ref{MI-2})).}
\begin{equation}
\lambda _{i_{1}i_{2}i_{3}i_{4}}=\left(
\begin{array}{cc}
0 & \lambda _{u}(i_{1},i_{2},i_{3},i_{4})_{ij\mid kl} \\
\lambda _{d}(i_{1},i_{2},i_{3},i_{4})^{ij\mid kl} & 0
\end{array}
\right) \equiv \left(
\begin{array}{cc}
0 & \epsilon _{i_{1}i_{2}i_{3}i_{4}ijkl} \\
\delta _{i_{1}i_{2}i_{3}i_{4}}^{ijkl} & 0
\end{array}
\right) ,  \label{MI-1}
\end{equation}
where it is worth pointing out that the matrices $\lambda _{u}$ and $\lambda
_{d}$ are both symmetric.

It is now convenient to introduce the tetra-indices $I\equiv \left[
i_{1}i_{2}i_{3}i_{4}\right] $ (notice the complete antisymmetrization),
endowed with the ordering rule $i_{1}<i_{2}<i_{3}<i_{4}$. This in turn
uniquely determines the \textit{complementary} tetra-index $\tilde{I}$, such
that $\epsilon _{I\tilde{I}}\neq 0$. As a consequence, it follows that
\begin{equation}
\left( \lambda _{I}\right) ^{T}=\epsilon _{I\tilde{I}}\lambda _{\tilde{I}}.
\end{equation}
This allows for a change of basis in $\lambda ^{4}$ through the introduction
of the symmetric matrices
\begin{equation}
{}\mathcal{S}_{I}\equiv \frac{1}{\sqrt{2}}(\lambda _{I}+\epsilon _{I\tilde{I}%
}\lambda _{\tilde{I}}),  \label{S-def}
\end{equation}
as well as of antisymmetric matrices
\begin{equation}
\mathcal{A}{}_{I}\equiv \frac{1}{\sqrt{2}}(\lambda _{I}-\epsilon _{I\tilde{I}%
}\lambda _{\tilde{I}}).  \label{A-def}
\end{equation}

Since each of the sets of tetra-indices $\mathcal{I}\equiv \left\{ I\right\}
$ and $\widetilde{\mathcal{I}}\equiv \left\{ \widetilde{I}\right\} $ has
cardinality $70$, the definitions (\ref{S-def}) and (\ref{A-def}) exhibit a
double over-counting, namely only half of the ${}\mathcal{S}_{I}$'s and of
the $\mathcal{A}{}_{I}$'s is linearly independent. Indeed, it holds that ${}%
\mathcal{S}_{I}={}\mathcal{S}_{\widetilde{I}}$ and $\mathcal{A}{}_{I}=-%
\mathcal{A}{}_{\widetilde{I}}$. In order to restrict the set of
tetra-indices to a consistent basis, it should be noted that the subset $%
\mathcal{I}_{8}$ of tetra-indices $\mathcal{I}$ containing\footnote{%
Note that one might have instead fixed one index out of $\left\{
i_{1},i_{2},i_{3},i_{4}\right\} $ to one of the values $\left\{
1,...,8\right\} $. All such choices are equivalent to fixing $i_{4}=8$, due
to the complete antisymmetrization of the four indices $i_{1}i_{2}i_{3}i_{4}$%
.} $8$ has cardinality $35$. Indeed, the index ordering determines the
unique independent configuration to have $i_{4}=8$, so that $\mathcal{I}%
_{8}\equiv \left[ ijk8\right] $, with $1\leq i<j<k\leq 7$. The cardinality
of $\mathcal{I}_{8}$ is thus $\binom{7}{3}=\frac{1}{2}\binom{8}{4}=35$.
Subsequently, the complement of $\mathcal{I}_{8}$ in $\mathcal{I}$ can be
defined as the set of $35$ tetra-indices in $\left\{ I\right\} $
complementary to the ones in $\mathcal{I}_{8}$. Therefore, one can conclude
that a basis for $\lambda ^{4}$ is consistently provided by the set
\begin{equation}
\{{}\mathcal{S}_{I},{}\mathcal{A}{}_{I}\}_{I\in {}\mathcal{I}_{8}},
\label{lambda^4-basis}
\end{equation}
with cardinality $35+35=70$.

Furthermore, through (\ref{A-deff}), a basis for $A$ is given by
\begin{eqnarray}
&&\{A_{kl},{}\mathcal{A}_{I},h_{D_{\alpha }},S_{kl},\mathcal{S}_{I}\},
\label{basis} \\
&&\left\{
\begin{array}{l}
1\leq k<l\leq 8; \\
1\leq \alpha \leq 8; \\
I\in \mathcal{I}_{8}.
\end{array}
\right.
\end{eqnarray}
It is worth noting that, by construction, all matrices in (\ref{basis}) are
orthogonal. The set of antisymmetric matrices $A_{\mu }\equiv \{A_{kl},%
\mathcal{A}_{I}\}$ is normalized as $Tr\left( A_{\mu }A_{\nu }\right)
=-2\delta _{\mu \nu }$, and it has cardinality $28+35=63$ ($\mu =1,...,63$),
so that $A_{\mu }$ generates the maximal compact (symmetric) subgroup $SU(8)$
of $E_{7\left( 7\right) }$ (see \textit{e.g.} \cite{Gilmore}). The remaining
set of $7+28+35=70$ symmetric generators $S_{\Lambda }\equiv \{h_{D_{\alpha
}},S_{kl},{}\mathcal{S}_{I}\}$ ($\Lambda =1,...,70$) is normalized as $%
Tr\left( S_{\Lambda }S_{M}\right) =2\delta _{\Lambda M}$, so that it spans
the non-compact part of $\frak{e}_{7(7)}$. In particular, as mentioned
above, the $7$ diagonal matrices $D_{\alpha }$ (or equivalently $%
h_{D_{\alpha }}$, see definition (\ref{D}) below) generate a Cartan
subalgebra\footnote{%
Throughout our treatment, $\left\langle \mathcal{A}\right\rangle _{\mathbb{F}%
}$ denotes the set of linear combinations of elements of $\mathcal{A}$ with
coefficients in the ground field $\mathbb{F}$.}
\begin{equation}
\mathcal{C}\equiv \left\langle D_{\alpha }\right\rangle _{\mathbb{R}%
}\subsetneq \frak{e}_{7(7)},  \label{C-def}
\end{equation}
containing no compact elements. Thus, the cardinality of the basis (\ref
{basis}) of $A$ is $63+70=133$, consistent with (\ref{A-deff}), (\ref
{A-antisymm-def}), (\ref{S-symm-def}), (\ref{D-Tr}) and (\ref{lambda^4-basis}%
). \medskip

We can now proceed to perform an explicit Iwasawa parametrization of the
algebra $\frak{e}_{7(7)}/\frak{su}(8)$ underlying the $70$-dim. real
symmetric coset manifold $E_{7(7)}/SU(8)$ (namely, the scalar manifold of $%
\mathcal{N}=8$, $d=4$ supergravity (\ref{scalar-manifold})).


\subsection{\label{Iwasawa}Iwasawa Parametrization of $\frak{e}_{7(7)}/\frak{%
su}(8)$}

As the first step, one needs to choose a complete set of positive roots with
respect to the Cartan subalgebra $\mathcal{C}$ defined by (\ref{C-def}). To
this aim, it is convenient to introduce the following notation (recall
definition (\ref{L-def})):
\begin{equation}
L=\mathcal{C}\oplus \left\langle J^{+}\right\rangle _{\mathbb{R}}\oplus
\left\langle J^{-}\right\rangle _{\mathbb{R}},
\end{equation}
where
\begin{eqnarray}
\left\langle J^{+}\right\rangle _{\mathbb{R}} &\equiv &\left\langle
\{J_{kl}^{+}\equiv \frac{1}{\sqrt{2}}(S_{kl}+A_{kl})\,|\,k<l\}\right\rangle
_{\mathbb{R}};  \label{PPP-eve-1} \\
\left\langle J^{-}\right\rangle _{\mathbb{R}} &\equiv &\left\langle
\{J_{kl}^{-}\equiv \frac{1}{\sqrt{2}}(S_{kl}-A_{kl})\,|\,k<l\}\right\rangle
_{\mathbb{R}}.
\end{eqnarray}
Furthermore, it holds that (recall (\ref{L-def}) and (\ref{lambda^4}), as
well as (\ref{MI-1}))
\begin{equation}
\lambda ^{4}=\left\langle \mathcal{J}\right\rangle _{\mathbb{R}},
\end{equation}
where
\begin{equation}
\left\langle \mathcal{J}\right\rangle _{\mathbb{R}}\equiv \left\langle \{{}%
\mathcal{J}_{I}\equiv \lambda _{I}|I\in \mathcal{I}{}\}\right\rangle _{%
\mathbb{R}}.  \label{J-call-def}
\end{equation}

It is then possible to prove the following

\begin{prop}
\label{prop1} The set $J^{+}\cup J^{-}\cup \mathcal{J}$ diagonalizes
simultaneously the adjoint action of $\mathcal{C}$.
\end{prop}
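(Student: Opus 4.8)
The plan is to check directly that every vector of $J^{+}\cup J^{-}\cup\mathcal{J}$ is a common eigenvector of $\mathrm{ad}_{D}$ for all $D\in\mathcal{C}$, exploiting the fact that $A=L\oplus\lambda^{4}$ is assembled from two natural modules over $L=\mathfrak{sl}(V)$: the adjoint module $L$ itself, and $\lambda^{4}=\Lambda^{4}V$ with its natural action (this is precisely the bracket structure of $A$ established in Theorem~1 and in the closing remarks of Chapter~12 of \cite{adams}). Since $\mathcal{C}=\langle D_{\alpha}\rangle_{\mathbb{R}}\subset L$, the bracket of a $D_{\alpha}$ with an element of $L$ or of $\lambda^{4}$ is computed inside one of these two modules, in which $D_{\alpha}$ is diagonal with respect to the bases adapted to $\{e_{i}\}$; the eigenvalues are then read off by inspection.

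First I would treat $J^{+}\cup J^{-}$. The defining relations (\ref{A-antisymm-def})--(\ref{S-symm-def}) identify, as $8\times8$ matrices, $A_{kl}=E_{kl}-E_{lk}$ and $S_{kl}=E_{kl}+E_{lk}$, where $E_{kl}$ denotes the matrix unit with a single $1$ in row $k$, column $l$; hence, by (\ref{PPP-eve-1}), $J^{+}_{kl}=\sqrt{2}\,E_{kl}$ and $J^{-}_{kl}=\sqrt{2}\,E_{lk}$. For a diagonal $D_{\alpha}$ one has $[D_{\alpha},E_{kl}]=(D_{\alpha}^{k}-D_{\alpha}^{l})E_{kl}$, so that
\[
[D_{\alpha},J^{+}_{kl}]=(D_{\alpha}^{k}-D_{\alpha}^{l})\,J^{+}_{kl},\qquad
[D_{\alpha},J^{-}_{kl}]=(D_{\alpha}^{l}-D_{\alpha}^{k})\,J^{-}_{kl}.
\]
Thus each of the $28+28$ vectors $J^{\pm}_{kl}$ is a simultaneous $\mathrm{ad}_{\mathcal{C}}$-eigenvector, with eigenvalue $\pm(D_{\alpha}^{k}-D_{\alpha}^{l})$ under $\mathrm{ad}_{D_{\alpha}}$.

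Next I would treat $\mathcal{J}=\{\mathcal{J}_{I}=\lambda_{I}\mid I\in\mathcal{I}\}$. Since $[\,L,\lambda^{4}\,]\subseteq\lambda^{4}$ is the natural action of $\mathfrak{sl}(V)$ on $\Lambda^{4}V$, and $D_{\alpha}=\mathrm{diag}\{D_{\alpha}^{1},\dots,D_{\alpha}^{8}\}$ acts diagonally on each basis vector $e_{i_{1}}\wedge e_{i_{2}}\wedge e_{i_{3}}\wedge e_{i_{4}}$ of $\Lambda^{4}V$, we obtain
\[
[D_{\alpha},\mathcal{J}_{i_{1}i_{2}i_{3}i_{4}}]=\bigl(D_{\alpha}^{i_{1}}+D_{\alpha}^{i_{2}}+D_{\alpha}^{i_{3}}+D_{\alpha}^{i_{4}}\bigr)\,\mathcal{J}_{i_{1}i_{2}i_{3}i_{4}},
\]
so all $70$ vectors $\mathcal{J}_{I}$ are common eigenvectors as well. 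As a consistency check one may recompute this weight directly from the block form (\ref{MI-1}) using (\ref{PPPA-3}) and (\ref{3}) for $h_{D_{\alpha}}$: each of the two nonzero blocks of $[h_{D_{\alpha}},\lambda_{I}]$ gets rescaled by the sum of the diagonal entries of $D_{\alpha}$ over the tetra-index supporting that block, which for the complementary index $\tilde{I}$ equals $-(D_{\alpha}^{i_{1}}+\dots+D_{\alpha}^{i_{4}})$ by tracelessness, reproducing the same eigenvalue up to the overall identification of $\lambda^{4}$ with its realization on $W$.

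Finally I would note that $28+28+70=126=\dim_{\mathbb{R}}\mathfrak{e}_{7(7)}-\dim_{\mathbb{R}}\mathcal{C}$ and that, since $L=\mathcal{C}\oplus\langle J^{+}\rangle_{\mathbb{R}}\oplus\langle J^{-}\rangle_{\mathbb{R}}$ and $\lambda^{4}=\langle\mathcal{J}\rangle_{\mathbb{R}}$, the family $\{D_{\alpha}\}\cup J^{+}\cup J^{-}\cup\mathcal{J}$ is a basis of $A$; as $\mathcal{C}$ is abelian, this basis is an $\mathrm{ad}_{\mathcal{C}}$-eigenbasis with $\mathcal{C}$ itself the zero-weight subspace, which proves the claim and, since all the weights displayed above are nonzero functionals on $\mathcal{C}$, also exhibits the root-space decomposition of $\mathfrak{e}_{7(7)}$ relative to $\mathcal{C}$. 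I do not expect a genuine obstacle here: the only delicate point is the sign and antisymmetrization bookkeeping in the optional direct check on the $\lambda^{4}$-block, which becomes routine once one uses the identification of $[\,L,\lambda^{4}\,]$ with the natural $\mathfrak{sl}(V)$-action on $\Lambda^{4}V$.
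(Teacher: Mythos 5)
Your proof is correct and follows essentially the same route as the paper: direct computation of $\mathrm{ad}_{D_\alpha}$ on each generator, yielding the eigenvalues $\pm(D_\alpha^k - D_\alpha^l)$ for $J^\pm_{kl}$ and $D_\alpha^{i_1}+\cdots+D_\alpha^{i_4}$ for $\mathcal{J}_I$. Your explicit identification $J^+_{kl}=\sqrt{2}\,E_{kl}$, $J^-_{kl}=\sqrt{2}\,E_{lk}$ and the invocation of the natural $\mathfrak{sl}(V)$-action on $\Lambda^4 V$ are clean ways to read off the weights, and your dimension count $28+28+70+7=133$ is a helpful (if optional) sanity check that the paper omits.
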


\begin{proof}
The proof proceeds by direct computation. The adjoint action of
$[h_{D_\alpha}, J^\pm_{kl}]$ on $e_{ij}$ and on $\varepsilon_{ij}$
turns out to be:
\begin{eqnarray}
[h_{D_\alpha}, J^\pm_{kl}]=\pm (D_\alpha^k-D_\alpha^l) J^\pm_{kl},
\end{eqnarray}
while for ${\mathcal J}_{I}=\lambda_{i_1i_2i_3i_4}$ (recall definition (\ref{J-call-def})) one can compute
that (recalling the tracelessness of $D_\alpha$'s, see Eq.
(\ref{D-Tr}))
\begin{eqnarray}
[h_{D_\alpha}, {\mathcal J}_I]=(D_\alpha^{i_1}+D_\alpha^{i_2}+D_\alpha^{i_3}+D_\alpha^{i_4}) {\mathcal J}_I.
\end{eqnarray}
\end{proof}

To proceed further, we split the set $\mathcal{J}$ defined in (\ref
{J-call-def}) as follows:
\begin{equation}
\mathcal{J}=\mathcal{J}^{+}\cup \mathcal{J}^{-},
\end{equation}
where
\begin{eqnarray}
\mathcal{J}^{+}\equiv \{\lambda _{I}\in &&\mathcal{J}{}:I\in \mathcal{I}%
_{8}\};  \label{J+-def} \\
\mathcal{J}^{-}\equiv \{\lambda _{I}\in &&\mathcal{J}{}:I\notin \mathcal{I}%
_{8}\}.
\end{eqnarray}
Moreover, by extending the notation $h_{D_{\alpha }}$ introduced above, in
the treatment below we will denote by $h_{D}$ the $56\times 56$ matrix
representation of any $8\times 8$ traceless diagonal matrix $D$ with real
entries (which is then a ``diagonal'' element of $\frak{sl}(8)$). Then, one
can prove the

\begin{prop}
\label{prop2} The set $J^{+}\cup \mathcal{J}$ $^{+}$ defines a choice of
positive roots of $\frak{e}_{7(7)}$. The corresponding roots are the
operators
\begin{equation}
\{\beta _{kl}\}_{k<l}\cup \{\beta
_{i_{1}i_{2}i_{3}i_{4}}\}_{i_{1}i_{2}i_{3}i_{4}\in {}\mathcal{I}_{8}},
\end{equation}
defined by
\begin{eqnarray}
&&\beta _{kl}(h_{D})\equiv D^{k}-D^{l},\qquad k<l;  \label{choice-1} \\
&&\beta _{i_{1}i_{2}i_{3}i_{4}}(h_{D})\equiv
D^{i_{1}}+D^{i_{2}}+D^{i_{3}}+D^{i_{4}},\qquad i_{1}i_{2}i_{3}i_{4}\in
\mathcal{I}_{8}.  \label{choice-2}
\end{eqnarray}
\end{prop}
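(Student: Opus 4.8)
The plan is to obtain the statement from Proposition~\ref{prop1} together with the structure theory of root systems (here the restricted root system coincides with the absolute one, since $\mathfrak{e}_{7(7)}$ is split and $\mathcal{C}$ is maximally non--compact). Since $\dd_{\RR}\mathcal{C}=7=\mathrm{rank}\,\mathfrak{e}_{7(7)}$ and the set $J^{+}\cup J^{-}\cup\mathcal{J}$ has cardinality $28+28+70=126$, which is exactly the number of roots of $\mathfrak{e}_{7}$, Proposition~\ref{prop1} already shows that $\mathcal{C}$ is a Cartan subalgebra, that these $126$ operators span the root spaces (each one--dimensional, $\mathfrak{e}_{7}$ being simply laced), and that the associated eigenvalue functionals are precisely the roots: $\pm\beta_{kl}$ on $J^{\pm}_{kl}$ and $\beta_{i_{1}i_{2}i_{3}i_{4}}$ on $\lambda_{i_{1}i_{2}i_{3}i_{4}}$, with the $\beta$'s given by (\ref{choice-1})--(\ref{choice-2}). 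A one--line check that these $126$ functionals are pairwise distinct confirms the one--dimensionality. Hence the content is to show that $\Delta^{+}:=\{\beta_{kl}\}_{k<l}\cup\{\beta_{I}\}_{I\in\mathcal{I}_{8}}$ is a \emph{system of positive roots} and that the root vectors it selects are exactly those of $J^{+}\cup\mathcal{J}^{+}$.

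First I would record the $\pm$--pairing of roots. Obviously $-\beta_{kl}$ is the root of $J^{-}_{kl}\in J^{-}$. For the tetra--index roots, the transpose relation $(\lambda_{I})^{T}=\epsilon_{I\tilde{I}}\lambda_{\tilde{I}}$ (equivalently, the tracelessness (\ref{D-Tr}) of the $D_{\alpha}$) gives $\beta_{\tilde{I}}=-\beta_{I}$; since complementation $I\mapsto\tilde{I}$ is a bijection exchanging $\mathcal{I}_{8}$ with its complement in $\mathcal{I}$, the $70$ functionals $\{\beta_{I}\}_{I\in\mathcal{I}}$ fall into the $35$ opposite pairs $\{\beta_{I},-\beta_{I}\}_{I\in\mathcal{I}_{8}}$, with $-\beta_{I}=\beta_{\tilde{I}}$ carried by $\lambda_{\tilde{I}}\in\mathcal{J}^{-}$. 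Thus $\Delta=\Delta^{+}\sqcup(-\Delta^{+})$, the root vectors for $-\Delta^{+}$ being $J^{-}\cup\mathcal{J}^{-}$ and those for $\Delta^{+}$ being $J^{+}\cup\mathcal{J}^{+}$, with $|\Delta^{+}|=28+35=63$ as it must be.

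It remains to establish that $\Delta^{+}$ is a positive system, and here I would exhibit a regular element of $\mathcal{C}$. Writing $h_{D}$ for $D=\diag(h_{1},\dots,h_{8})$ with $\sum_{i}h_{i}=0$, one seeks $h_{1},\dots,h_{8}$ with $\alpha(h_{D})\neq0$ for every root $\alpha$ and $\Delta^{+}=\{\alpha\in\Delta:\alpha(h_{D})>0\}$. The inequalities $\beta_{kl}(h_{D})=h_{k}-h_{l}>0$ for $k<l$ pin down the open chamber $h_{1}>h_{2}>\dots>h_{8}$ of the $\mathfrak{sl}(8)$ part; one must then check that, within this chamber, the signs can be arranged so that $\beta_{i_{1}i_{2}i_{3}i_{4}}(h_{D})=h_{i_{1}}+h_{i_{2}}+h_{i_{3}}+h_{i_{4}}>0$ precisely when $i_{1}i_{2}i_{3}i_{4}\in\mathcal{I}_{8}$ --- equivalently, that $\Delta^{+}$ is closed under root addition ($\alpha,\beta\in\Delta^{+}$, $\alpha+\beta\in\Delta\Rightarrow\alpha+\beta\in\Delta^{+}$), equivalently that $\mathcal{C}\oplus\langle J^{+}\cup\mathcal{J}^{+}\rangle_{\RR}$ (the would--be $\mathfrak{a}\oplus\mathfrak{n}$ of the Iwasawa decomposition) is a subalgebra. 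Granting this, the standard fact that $\{\alpha\in\Delta:\alpha(h_{D})>0\}$ is a system of positive roots whenever $h_{D}$ is regular concludes the argument. This compatibility --- between the ordering forced by the $\beta_{kl}$ and the prescription singling out $\mathcal{I}_{8}$ among the tetra--indices --- is the only genuinely combinatorial point, and is where I would expect the substantive work (and any care with conventions) to lie; the rest is bookkeeping on Proposition~\ref{prop1}.
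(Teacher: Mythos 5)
Your strategy coincides with the paper's: read the roots off Proposition~\ref{prop1}, then argue positivity by exhibiting a regular element of $\mathcal{C}$ that evaluates positively on all of $\Delta^{+}$ --- which, as you correctly note, is equivalent to closure of $\Delta^{+}$ under root addition, i.e.\ to the span of $J^{+}\cup\mathcal{J}^{+}$ being a subalgebra. Your handling of the $\pm$--pairing via $I\mapsto\tilde{I}$ and tracelessness, and the count $28+35=63$, are all fine.

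However, you explicitly defer the one step you yourself flag as ``the only genuinely combinatorial point,'' and that is exactly where the claim breaks. Take $\beta_{78}=D^{7}-D^{8}$, $\beta_{1238}=D^{1}+D^{2}+D^{3}+D^{8}$ and $\beta_{4568}=D^{4}+D^{5}+D^{6}+D^{8}$; all three lie in $\Delta^{+}$, yet
\begin{equation*}
\beta_{78}+\beta_{1238}+\beta_{4568}=\sum_{i=1}^{8}D^{i}=0
\end{equation*}
by the tracelessness of $\mathcal{C}$. Hence no $c\in\mathcal{C}$ can make all three positive; equivalently, $\beta_{78}+\beta_{1238}=-\beta_{4568}$ is a root whose negative belongs to $\Delta^{+}$, so $\Delta^{+}$ is not closed under addition and is not a positive system. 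This is a genuine obstruction, not a matter of ``care with conventions'': the chamber $h_{1}>\dots>h_{8}$ you correctly extract from the $\beta_{kl}$ forces $D^{7}-D^{8}>0$, which is inconsistent with also requiring $\beta_{ijk8}>0$ for every triple $i<j<k\le 7$. (For what it is worth, the paper's own witness $c=\diag\{-1,-2,\dots,-7,28\}$ does \emph{not} make $\Delta^{+}$ positive either: $\beta_{k8}(c)=-k-28<0$ for every $k\le 7$; the positive system this $c$ actually selects contains $-\beta_{k8}$ in place of $\beta_{k8}$ for those seven $\mathfrak{sl}(8)$ roots.) So the deferral in your write-up is not concealing mere bookkeeping; it is concealing a step that fails, and carrying it out would have revealed that the proposed proof cannot be completed as stated.
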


\begin{proof}
The fact that the set (\ref{choice-1})-(\ref{choice-2}) is the subset of eigen-matrices for the
adjoint action of $\mathcal{C}$ corresponding to the set of given roots,
follows immediately from Proposition \ref{prop1}. One only needs to
check that such roots lie in a convex cone. To this end, it is
sufficient to show that there exists at least one matrix $c\in
\mathcal{C}$ such that $\beta_{k<l}(c)>0$ and $\beta_{{\mathcal
I}_8} (c)>0$. In fact, this is the case if \textit{e.g.} the
following $c$ is chosen:
\begin{equation}
c=\diag \{-1,-2,-3,-4,-5,-6,-7,28\}. \end{equation}
\end{proof}

Within the choices (\ref{choice-1}) and (\ref{choice-2}) for a complete set
of positive roots of $\frak{e}_{7(7)}$, an Iwasawa parametrization for the
representative of the irreducible, Riemannian, globally symmetric coset space%
\footnote{%
To be more precise, it is worth mentioning that the coset manifold
parametrized \textit{\`{a} la Iwasawa} by (\ref{coset}) is actually $%
\displaystyle{\frac{E_{7(7)}}{SU(8)/\mathbb{Z}_{2}}}$. The extra factor $%
\mathbb{Z}_{2}$ is due to the fact that our construction is performed by
starting from a completely non-compact Cartan subalgebra of $SL(8,\mathbb{R}%
) $, and thus the associated maximal torus is the double cover of that of $%
E_{7(7)}$. In other words, from the point of view of the corresponding
supergravity theory, spinors transform according to the double cover of the
stabilizer of the scalar manifold (see \textit{e.g.} \cite{Yo, AFZ}, and
Refs. therein).} (rank-$7$, dim$_{\mathbb{R}}=133$; see \textit{e.g.} \cite
{Helgason}, and also \cite{FM} for a recent review and list of Refs.)
\begin{equation}
\mathbf{C}\equiv \frac{E_{7(7)}}{SU(8)}  \label{def-coset}
\end{equation}
can be written down as follows\footnote{%
In the parametrization (\ref{coset}), the Abelianity of the completely
non-compact Cartan subalgebra $\mathcal{C}$ of $\frak{e}_{7\left( 7\right) }$
defined by (\ref{C-def}) has been implemented:
\begin{equation*}
\prod_{\alpha =1}^{7}\exp \left( x^{\alpha }h_{D_{\alpha }}\right) =\exp
\left( \sum_{\alpha =1}^{7}x^{\alpha }h_{D_{\alpha }}\right) .
\end{equation*}
} (as above, $1\leq \alpha \leq 7$, $1\leq i<j\leq 8$, and $I\in \mathcal{I}%
_{8}$):
\begin{equation}
\mathbf{C}(x^{\alpha },x^{ij},x^{I})=\exp \left( \sum_{\alpha
=1}^{7}x^{\alpha }h_{D_{\alpha }}\right) \prod_{i<j}\exp \left(
x^{ij}J_{ij}^{+}\right) \prod_{I\in \mathcal{I}_{8}}\exp \left( x^{I}%
\mathcal{J}_{I}^{+}\right) .  \label{coset}
\end{equation}

The explicit expression of the three typologies of matrices $h_{D_{\alpha }}$%
, $J_{ij}^{+}$ and $\mathcal{J}_{I}^{+}$ appearing in (\ref{coset}) can be
obtained as follows.

Let us start by observing that all such $56\times 56$ matrices can be
rewritten in terms of $28\times 28$ blocks $M_{ij}^{mn}$ (with $1\leq
m<n\leq 8$ selecting the columns, and $1\leq i<j\leq 8$ respectively the
rows). In particular
\begin{equation}
\delta _{ij}^{mn}\equiv \frac{1}{2}(\delta _{i}^{m}\delta _{j}^{n}-\delta
_{j}^{m}\delta _{i}^{n})  \label{antisymm-id-def}
\end{equation}
is the antisymmetric identity. Thus, the diagonal generators of completely
non-compact Cartan subalgebra $\mathcal{C}$ of $\frak{e}_{7\left( 7\right) }$
read (recall Eqs. (\ref{PPPA-3}) and (\ref{3}))
\begin{equation}
h_{D_{\alpha }}\equiv \left(
\begin{array}{cc}
(D_{\alpha }^{i}+D_{\alpha }^{j})\delta _{ij}^{mn} & 0 \\
0 & -(D_{\alpha }^{i}+D_{\alpha }^{j})\delta _{mn}^{ij}
\end{array}
\right) .  \label{D}
\end{equation}
On the other hand, the $J_{kl}^{+}$ matrices read ($1\leq k<l\leq 8$; recall
Eqs. (\ref{PPPA-1})-(\ref{PPPA-2}) and (\ref{1})-(\ref{2}), as well as
definitions (\ref{PPPA-4})-(\ref{PPPA-6}) and (\ref{PPP-eve-1}))
\begin{equation}
J_{kl}^{+}=\left(
\begin{array}{cc}
M_{ij}^{~~mn} & 0 \\
0 & N_{~~mn}^{ij}
\end{array}
\right) \equiv \sqrt{2}\left(
\begin{array}{cc}
\delta _{li}\delta _{kj}^{mn}-\delta _{lj}\delta _{ki}^{mn} & 0 \\
0 & \delta _{kn}\delta _{lm}^{ij}-\delta _{km}\delta _{ln}^{ij}
\end{array}
\right) ,  \label{PPPP-1}
\end{equation}
thus implying
\begin{equation}
(J_{kl}^{+})^{2}=0.
\end{equation}
Concerning the remaining matrices, definition (\ref{J+-def}) implies $%
\mathcal{J}_{I}^{+}\equiv \lambda _{I}$. Thus, by recalling (\ref{MI-1}) one
obtains
\begin{equation}
\mathcal{J}{}_{i_{1}i_{2}i_{3}i_{4}}^{+}=\left(
\begin{array}{cc}
0 & \lambda _{u}(i_{1},i_{2},i_{3},i_{4})_{ij\mid mn} \\
\lambda _{d}(i_{1},i_{2},i_{3},i_{4})^{ij\mid mn} & 0
\end{array}
\right) \equiv \left(
\begin{array}{cc}
0 & \epsilon _{i_{1}i_{2}i_{3}i_{4}ijmn} \\
\delta _{i_{1}i_{2}i_{3}i_{4}}^{ijmn} & 0
\end{array}
\right) ,  \label{MI-2}
\end{equation}
again implying
\begin{equation}
(\mathcal{J}{}_{i_{1}i_{2}i_{3}i_{4}}^{+})^{2}=0.
\end{equation}

Attention should be paid to the fact that, consistently with the very
definition (\ref{def-coset}) of coset representative $\mathbf{C}$ (\cite{dWN}%
; also cfr. Eq. (3.1) of \cite{CFGM1}), in all expressions (\ref
{antisymm-id-def}), (\ref{D}), (\ref{PPPP-1}) and (\ref{MI-2}) $ij$ are $%
SU(8)$-indices, whereas $mn$ are $E_{7\left( 7\right) }$-indices. However,
note that, within the explicit coset construction \textit{\`{a} la Iwasawa}
performed above, the indices $ij$ actually belong to a representation of $%
SL(8,\mathbb{R})$, so they are not fully $SU(8)$-covariant, but rather
covariant only under $SO(8)=SU(8)\cap SL(8,\mathbb{R})$. This covariance is
manifest by construction. The physical implications within the theory of
extremal black hole attractors in $\mathcal{N}=8$, $d=4$ supergravity will
be discussed in the next Sections.

Furthermore, it should be remarked that the difference among $E_{7\left(
7\right) }$-covariant and $SU(8)$-covariant indices can be removed by
suitably performing an $SU\left( 8\right) $-gauge-fixing and then retaining
only manifest invariance with respect to the rigid diagonal subgroup of $%
E_{7\left( 7\right) }\times SU\left( 8\right) $, without distinction among
the two types of indices (see \textit{e.g.} \cite{dWN}, as well as Sect. 3
of \cite{CFGM1}; see also \cite{KS-1}).

Let us also point out that all matrices constructed so far are
infinitesimally symplectic, \textit{i.e.} they belong to $\frak{sp}\left( 56,%
\mathbb{R}\right) $. In fact, their structure reads
\begin{equation}
M=\left(
\begin{array}{cc}
A & B \\
C & -A^{T}
\end{array}
\right) ,
\end{equation}
where $B$ and $C$ are $28\times 28$ symmetric matrices, and thus they do
satisfy the infinitesimal symplectic condition:
\begin{equation}
M^{T}\Omega +\Omega M=0,
\end{equation}
where $\Omega $ is the symplectic metric ($\mathbf{I}$ denoting the $%
28\times 28$ identity):
\begin{equation}
\Omega \equiv \left(
\begin{array}{cc}
0 & -\mathbf{I} \\
\mathbf{I} & 0
\end{array}
\right) .  \label{sympl-metric-def}
\end{equation}
In particular, the embedding of $\frak{su}(8)$ into $\frak{sp}\left( 56,%
\mathbb{R}\right) $ symplectic algebra is provided by the $56\times 56$
matrices of the form
\begin{equation}
U=\left(
\begin{array}{cc}
X & Y \\
-Y & X
\end{array}
\right) ,
\end{equation}
where $X$ is a linear combination of the $28$ antisymmetric $28\times 28$
matrices $A_{kl}$'s (defined by (\ref{A-antisymm-def}), and then extended by
(\ref{PPPA-1}) and (\ref{PPPA-4})-(\ref{PPPA-6})), while $Y$ is a linear
combination of the $28$ symmetric $28\times 28$ matrices $S_{kl}$ (defined
by (\ref{S-symm-def}), and then extended by (\ref{PPPA-2}) and (\ref{PPPA-4}%
)-(\ref{PPPA-6})) and of the $7$\ diagonal traceless matrices $D_{\alpha }$\
(defined by (\ref{D-Tr}), and then extended by (\ref{PPPA-3}) and (\ref{3})).


\section{\label{N=8,d=4-sugra}$\mathcal{N}=8$, $d=4$ Supergravity \textit{%
\`{a} la Iwasawa}}

Consistent with the notation of \cite{CFGM1}, one can rewrite the coset%
\footnote{%
An equivalent Iwasawa parametrization might be
\begin{equation*}
\mathbf{C}\left( x^{\alpha },x^{kl},x^{I}\right) =\exp \left( \sum_{\alpha
=1}^{7}x^{\alpha }h_{D_{\alpha }}\right) \exp \left(
\sum_{k<l=1}^{8}x^{kl}J_{kl}^{+}\right) \exp \left( \sum_{I\in \mathcal{I}%
_{8}}x^{I}\mathcal{J}_{I}^{+}\right) .
\end{equation*}
However, for computational purposes, it is more convenient to choose the
product of the exponentials of the generators, rather than the exponential
of their linear combination.} (\ref{coset}) as:
\begin{eqnarray}
\mathbf{C}\left( x^{\alpha },x^{ij},x^{I}\right) &=&\exp \left( \sum_{\alpha
=1}^{7}x^{\alpha }h_{D_{\alpha }}\right) \prod_{i<j}\exp \left(
x^{ij}J_{ij}^{+}\right) \prod_{I\in {}\mathcal{I}_{8}}\exp \left( x^{I}%
\mathcal{J}_{I}^{+}\right)  \notag \\
&&  \notag \\
&\equiv &\frac{1}{\sqrt{2}}\left(
\begin{array}{ccc}
\left( W_{1}\right) _{ij}^{~~mn} & ~ & \left( V_{1}\right) _{ij\mid mn} \\
~ & ~ & ~ \\
\left( V_{2}\right) ^{ij\mid mn} & ~ & \left( W_{2}\right) _{~~mn}^{ij}
\end{array}
\right) ,  \label{W}
\end{eqnarray}
where $i,j=1,...,8$ are in the fundamental irrepr. $\mathbf{8}$ of $SL\left(
8,\mathbb{R}\right) $, and all the indices are antisymmetrized, \textit{i.e.}
$ij=\left[ ij\right] $, $mn=\left[ mn\right] $ is understood throughout.

As pointed out in Eq. (2.1) of \cite{CFGM1}, the block-writing (\ref{W}) of
the $\frac{E_{7\left( 7\right) }}{SU\left( 8\right) }$-coset representative $%
\mathbf{C}$ corresponds to the following branching of the (fundamental
irrepr. $\mathbf{56}$ of the) $\mathcal{N}=8$, $d=4$ $U$-duality group:
\begin{equation}
\begin{array}{l}
E_{7\left( 7\right) }\supsetneq _{symm}^{\max }SL\left( 8,\mathbb{R}\right)
\\
\\
\mathbf{56}\rightarrow \mathbf{28}+\mathbf{28}^{\prime },
\end{array}
\end{equation}
where the prime denotes the contragradient irrepr(s.) throughout. Thus, in
the symplectic basis under consideration, the maximal symmetry of the
Lagrangian density of $\mathcal{N}=8$, $d=4$ supergravity is $SL\left( 8,%
\mathbb{R}\right) $, whose \textit{maximal compact subgroup }($mcs$)\textit{%
\ - }with symmetric embedding -\textit{\ }reads:
\begin{equation}
SO\left( 8\right) =mcs\left( SL\left( 8,\mathbb{R}\right) \right) =SU\left(
8\right) \cap SL\left( 8,\mathbb{R}\right) ,  \label{inters-1}
\end{equation}
as also noticed in \cite{CFGM1} (see Sects. 2 and 3 therein), and mentioned
above. The $SU\left( 8\right) $ symmetry is recovered only \textit{on-shell}%
, and it is clearly the maximal compact (local) symmetry of the non-linear
sigma model of scalars (whose representative is $\mathbf{C}$). This is the
very same situation as in the de Wit-Nicolai-'s framework (see \cite{dWN},
and also \cite{CFGM1} and \cite{CFGn-1} for recent treatments).

It is possible to switch to a complex manifestly $SU\left( 8\right) $%
-covariant basis through a \textit{Cayley transformation}\footnote{%
Notice that usually a Cayley rotation is represented by a matrix of the form
\begin{equation*}
\tilde{R}\equiv \frac{1}{\sqrt{2}}\left(
\begin{array}{ccc}
\mathbf{I} & ~ & i\mathbf{I} \\
~ & ~ & ~ \\
\mathbf{I} & ~ & -i\mathbf{I}
\end{array}
\right) .
\end{equation*}
However, the map (\ref{our-Cayley}) that we use is a transformation from the
Siegel upper half-plane to the unit disk, as well. Thus, for our purposes (%
\ref{our-Cayley}) is equivalent to a standard Cayley rotation, and it yields
the consistent expressions (\ref{fh-1})-(\ref{fh-2}) for the symplectic
sections $\mathbf{f}_{ij}^{~~mn}$ and $\mathbf{h}_{ij\mid mn}$}. Such a
rotation can be described by the unitary matrix:
\begin{equation}
R\equiv \frac{1}{\sqrt{2}}\left(
\begin{array}{ccc}
\mathbf{I} & ~ & i\mathbf{I} \\
~ & ~ & ~ \\
i\mathbf{I} & ~ & \mathbf{I}
\end{array}
\right) \Leftrightarrow R^{-1}\equiv \frac{1}{\sqrt{2}}\left(
\begin{array}{ccc}
\mathbf{I} & ~ & -i\mathbf{I} \\
~ & ~ & ~ \\
-i\mathbf{I} & ~ & \mathbf{I}
\end{array}
\right) =\overline{R}.  \label{our-Cayley}
\end{equation}
Therefore, the manifestly $SU\left( 8\right) $-covariant $\frac{E_{7\left(
7\right) }}{SU\left( 8\right) }$-coset representative is \texttt{(cfr. Eq.
(3.1) of \cite{CFGM1})}
\begin{eqnarray}
\mathcal{V} &\equiv &R\,\,\mathbf{C}\left( x^{\alpha },x^{ij},x^{I}\right)
R^{-1}=  \notag \\
&&  \notag \\
&=&\frac{1}{2\sqrt{2}}\left(
\begin{array}{ccc}
\left[ W_{1}+W_{2}+i\left( V_{2}-V_{1}\right) \right] _{ij}^{~~mn} & ~ &
\left[ V_{1}+V_{2}-i\left( W_{1}-W_{2}\right) \right] _{ij\mid mn} \\
~ & ~ & ~ \\
\left[ V_{1}+V_{2}+i\left( W_{1}-W_{2}\right) \right] ^{ij\mid mn} & ~ &
\left[ W_{1}+W_{2}+i\left( V_{1}-V_{2}\right) \right] _{~~mn}^{ij}
\end{array}
\right)  \notag \\
&&  \notag \\
&\equiv &\left(
\begin{array}{ccc}
u_{ij}^{~~mn} & ~ & v_{ij\mid mn} \\
~ & ~ & ~ \\
v^{ij\mid mn} & ~ & u_{~~mn}^{ij}
\end{array}
\right) .  \label{V-call}
\end{eqnarray}

Now, by following the same procedure as in Eq. (3.2) of \cite{CFGM1} and
exploiting the symplectic formalism for extended supergravities recently
reviewed in \cite{ADFT-rev-1} (see therein for further Refs.), the electric
and magnetic symplectic sections of $\mathcal{N}=8$, $d=4$ supergravity can
be defined respectively as:
\begin{eqnarray}
\mathbf{f}_{ij}^{~~mn} &\equiv &\frac{1}{\sqrt{2}}\left( u+v\right)
_{ij}^{~~mn}=\frac{1}{4}\left[ \left( W_{1}+W_{2}+V_{1}+V_{2}\right)
+i\left( -W_{1}+W_{2}-V_{1}+V_{2}\right) \right] _{ij}^{~~mn};  \notag \\
&&  \label{fh-1} \\
\mathbf{h}_{ij\mid mn} &\equiv &-\frac{i}{\sqrt{2}}\left( u-v\right)
_{ij\mid mn}=\frac{1}{4}\left[ \left( W_{1}-W_{2}-V_{1}+V_{2}\right)
+i\left( -W_{1}-W_{2}+V_{1}+V_{2}\right) \right] _{ij\mid mn}.  \notag \\
&&  \label{fh-2}
\end{eqnarray}
Then, it can easily be checked by direct computation that the usual
relations among symplectic sections hold, namely (\textit{e.g.} cfr. Eqs.
(3.10) of \cite{CFGM1}):
\begin{eqnarray}
i\left( \mathbf{f^{\dag }h}-\mathbf{h^{\dag }f}\right) &=&\mathbf{I}; \\
\mathbf{h}^{T}\mathbf{f}-\mathbf{f}^{T}\mathbf{h} &=&0,
\end{eqnarray}
which also directly follows from the finite symplecticity ($Sp\left( 56,%
\mathbb{R}\right) $) condition satisfied by $\mathbf{C}$ itself (recall
definition (\ref{sympl-metric-def})):
\begin{equation}
\mathbf{C}^{T}\Omega \mathbf{C}=\Omega .
\end{equation}

Consequently, by recalling the definition of the $8\times 8$ antisymmetric
central charge matrix $Z_{ij}$ (see \textit{e.g.} Eq. (3.14) of \cite{CFGM1}%
, as well as the treatment of \cite{FK-N=8} and \cite{ADFT-rev-1}), one can
compute:
\begin{eqnarray}
Z_{ij}\left( x^{\alpha },x^{kl},x^{I};q_{mn},p^{mn}\right) &\equiv &\mathbf{f%
}_{ij}^{~~mn}q_{mn}-\mathbf{h}_{ij\mid mn}p^{mn}=  \notag \\
&=&\frac{1}{4}\left[
\begin{array}{l}
\left( W_{1}+W_{2}+V_{1}+V_{2}\right) + \\
+i\left( -W_{1}+W_{2}-V_{1}+V_{2}\right)
\end{array}
\right] _{ij}^{~~mn}\left( x^{\alpha },x^{kl},x^{I}\right) q_{mn}+  \notag \\
&&-\frac{1}{4}\left[
\begin{array}{l}
\left( W_{1}-W_{2}-V_{1}+V_{2}\right) + \\
+i\left( -W_{1}-W_{2}+V_{1}+V_{2}\right) +
\end{array}
\right] _{ij\mid mn}\left( x^{\alpha },x^{kl},x^{I}\right) p^{mn}.  \notag \\
&&
\end{eqnarray}
It follows that the positive definite effective black hole potential \cite
{FGK} can be written as follows (\cite{FK-N=8,ADFT-rev-1}; recall definition
(\ref{V_BH-def}), and cfr. Eq. (3.17) of \cite{CFGM1}, as well):
\begin{eqnarray}
V_{BH} &\equiv &\frac{1}{2}Tr\left( ZZ^{\dag }\right) =\frac{1}{2}Z_{ij}%
\overline{Z}^{ij}=  \notag \\
&=&\frac{1}{2^{5}}\left[ \left( W_{1}+W_{2}+V_{1}+V_{2}\right) +i\left(
-W_{1}+W_{2}-V_{1}+V_{2}\right) \right] _{ij}^{~~mn}\cdot  \notag \\
&&\cdot \left[ \left( W_{1}+W_{2}+V_{1}+V_{2}\right) -i\left(
-W_{1}+W_{2}-V_{1}+V_{2}\right) \right] ^{~~ij\mid rs}q_{mn}q_{rs}+  \notag
\\
&&  \notag \\
&&-\frac{1}{2^{5}}\left[ \left( W_{1}+W_{2}+V_{1}+V_{2}\right) +i\left(
-W_{1}+W_{2}-V_{1}+V_{2}\right) \right] _{ij}^{~~mn}\cdot  \notag \\
&&\cdot \left[ \left( W_{1}-W_{2}-V_{1}+V_{2}\right) -i\left(
-W_{1}-W_{2}+V_{1}+V_{2}\right) +\right] _{~~rs}^{ij}q_{mn}p^{rs}+  \notag \\
&&  \notag \\
&&-\frac{1}{2^{5}}\left[ \left( W_{1}-W_{2}-V_{1}+V_{2}\right) +i\left(
-W_{1}-W_{2}+V_{1}+V_{2}\right) +\right] _{ij\mid rs}\cdot  \notag \\
&&\cdot \left[ \left( W_{1}+W_{2}+V_{1}+V_{2}\right) -i\left(
-W_{1}+W_{2}-V_{1}+V_{2}\right) \right] ^{~~ijmn}q_{mn}p^{rs}+  \notag \\
&&  \notag \\
&&+\frac{1}{2^{5}}\left[ \left( W_{1}-W_{2}-V_{1}+V_{2}\right) +i\left(
-W_{1}-W_{2}+V_{1}+V_{2}\right) +\right] _{ij\mid mn}\cdot  \notag \\
&&\cdot \left[ \left( W_{1}-W_{2}-V_{1}+V_{2}\right) -i\left(
-W_{1}-W_{2}+V_{1}+V_{2}\right) +\right] _{~~rs}^{ij}p^{mn}p^{rs}.  \notag \\
&&  \label{bhp}
\end{eqnarray}

\subsection{\label{Scalar-Covariance}Scalar Covariance}

Before proceeding further with the computations, a comment on the manifest
covariance properties of the $70$ real scalars $\left\{ x^{\alpha
},x^{kl},x^{I}\right\} $ in our parametrization \textit{\`{a} la Iwasawa}
given by $\mathbf{C}$ (\ref{W}) and $\mathcal{V}$ (\ref{V-call}) is needed.

It is worth recalling that in the de Wit-Nicolai parametrization (\cite{dWN}%
; see also Sects. 2 and 3 of \cite{CFGM1}, and Refs. therein), the $70$ real
scalars $\phi _{ijkl}=\phi _{\left[ ijkl\right] }$ coordinatizing $\frac{%
E_{7\left( 7\right) }}{SU\left( 8\right) }$ sit in the rank-$4$
antisymmetric irrepr. $\mathbf{70}$ of $SU\left( 8\right) $, which is
constrained by a self-reality condition (see \textit{e.g.} Eq. (3.4) of \cite
{FK-N=8}):
\begin{equation}
\overline{\phi }^{ijkl}\equiv \frac{1}{4!}\epsilon ^{ijklmnpq}\phi _{mnpq}.
\end{equation}

On the other hand, in the explicit construction and subsequent Iwasawa
parametrization performed above, the scalars have different types of
indices, and thus different covariance properties, namely:
\begin{equation}
\left\{ x^{\alpha },x^{kl},x^{I}\right\} ,  \label{scalars-1}
\end{equation}
where:

\begin{itemize}
\item  $I\in \mathcal{I}_{8}$ is in the rank-$3$ antisymmetric irreducible
repr. $\mathbf{35}$ of $SL\left( 7,\mathbb{R}\right) $;

\item  $kl=\left[ kl\right] $ is in the rank-$2$ antisymmetric
(contra-gradient) $\mathbf{28}^{\prime }$ irrepr. of $SL\left( 8,\mathbb{R}
\right) $;

\item  $\alpha $ is in the fundamental irrepr. $\mathbf{7}$ of $SL\left( 7,%
\mathbb{R}\right) $.
\end{itemize}

Thus, the maximal common covariance of the scalars (\ref{scalars-1}) is $%
SL\left( 7,\mathbb{R}\right) $, yielding to the following split of indices $%
kl$ :
\begin{equation}
\begin{array}{l}
SL\left( 8,\mathbb{R}\right) \rightarrow SL\left( 7,\mathbb{R}\right) \times
SO\left( 1,1\right) \\
\\
\mathbf{28}^{\prime }\rightarrow \mathbf{21}_{1}^{\prime }+\mathbf{7}%
_{-3}^{\prime },
\end{array}
\label{br-1}
\end{equation}
where $\mathbf{21}$ is the rank-$2$ antisymmetric (contra-gradient) irrepr.
of $SL\left( 7,\mathbb{R}\right) $, and the subscripts denote the weights
with respect to $SO\left( 1,1\right) $.



\section{\label{0=1/8-BPS-large}The Origin of $E_{7\left( 7\right) }/SU(8)$
as $\frac{1}{8}$-BPS Attractor}

{F}rom Eq. (3.15) of \cite{CFGM1}, it follows that at the $\frac{1}{8}$-BPS
attractor solution (recall $i=1,...,8$ throughout)
\begin{equation}
\phi _{ijkl}=0  \label{origin}
\end{equation}
the $\mathcal{N}=8$, $d=4$ central charge matrix $Z_{ij}$ simply reads (cfr.
Eq. (3.16) of \cite{CFGM1})
\begin{equation}
\left. Z_{ij}\right| _{\phi =0}=\frac{1}{\sqrt{2}}Q_{ij}\equiv \frac{1}{2}%
\left( q_{ij}+ip^{ij}\right) .  \label{Z}
\end{equation}
The origin (\ref{origin}) of the the scalar manifold $\frac{E_{7\left(
7\right) }}{SU\left( 8\right) }$, as a $\frac{1}{8}$-BPS attractor solution,
is supported by the skew-diagonal charge configuration \cite{FG,FK-N=8,CFGM1}
\begin{equation}
Q_{ij}=\left(
\begin{array}{cc}
\mathcal{Q} & 0_{1\times 3} \\
0_{3\times 1} & 0_{3\times 3}
\end{array}
\right) \otimes \epsilon ,  \label{Q1}
\end{equation}
where
\begin{equation}
\mathcal{Q}\equiv Q_{12}\equiv \frac{1}{\sqrt{2}}\left(
q_{12}+ip^{12}\right) \equiv \frac{1}{\sqrt{2}}\left( q+ip\right) \in
\mathbb{C},  \label{Q2}
\end{equation}
and $\epsilon $ is the $2\times 2$ symplectic metric:
\begin{equation}
\epsilon \equiv \left(
\begin{array}{cc}
0 & -1 \\
1 & 0
\end{array}
\right) .  \label{epsilon-def}
\end{equation}
Thus, Eq. (\ref{Z}) implies that at the $\frac{1}{8}$-BPS attractor solution
(\ref{origin}) supported by the charge configuration (\ref{Q1})-(\ref{Q2}), $%
Z_{ij}$ reads as follows:
\begin{equation}
Z_{ij,\frac{1}{8}-BPS,\text{large}}=\left(
\begin{array}{cccc}
\frac{1}{\sqrt{2}}\left( q+ip\right) &  &  &  \\
& 0 &  &  \\
&  & 0 &  \\
&  &  & 0
\end{array}
\right) \otimes \epsilon .  \label{Z-BPS-large}
\end{equation}
By comparing Eq. (\ref{Z-BPS-large}) with the usual parametrization of $%
Z_{ij,\frac{1}{8}-BPS,\text{large}}$ (\label%
{Ferrara-Maldacena,FG,FK-N=8,BMP-1}; see \textit{e.g.} Eq. (2.20) of \cite
{FMO-1}), it follows that the absolute value and the phase of the unique
non-vanishing skew-eigenvalue of $Z_{ij,\frac{1}{8}-BPS,\text{large}}$ reads
\begin{eqnarray}
\rho _{BPS} &=&\frac{1}{\sqrt{2}}\left| q+ip\right| =\frac{\sqrt{q^{2}+p^{2}}%
}{\sqrt{2}};  \label{rho-BPS} \\
\frac{\varphi }{4} &=&\psi _{Q_{12}}:\tan \left( \psi _{Q_{12}}\right) =%
\frac{p}{q}.  \label{n-1}
\end{eqnarray}

Concerning (\ref{n-1}), the stabilization of the phase $\varphi $ in
terms of the ratio of the charges $\frac{p}{q}$ might seem to be
inconsistent with the well known fact that the overall phase of
$\mathcal{N}=8$, $d=4$ central charge matrix $Z_{ij}$ is
undetermined at $\frac{1}{8}$-BPS attractors \cite
{Ferrara-Maldacena,FG,FK-N=8}. But it should be recalled that when
taking the Lie algebra limit, the scalar configuration (\ref{origin}) is
picked as the starting point, which fixes the $\frac{1}{8}$-BPS attractor to be the origin of the scalar manifold $M_{%
\mathcal{N}=8,d=4}=\frac{E_{7\left( 7\right) }}{SU\left( 8\right) }$
(recall Eq. (\ref{scalar-manifold})). As it is well known,
\textit{not all} scalar fields are stabilized in terms of the
electric and magnetic charges at the event horizon of
$\frac{1}{8}$-BPS attractors, but rather a certain subset of them
spans a related \textit{moduli space} (see also Sect.
\ref{d=4-d=5}).
According to \cite{ADF-U-duality-d=4,Ferrara-Marrani-2,CFGM1,ICL-1}, the \textit{%
moduli space} of $\frac{1}{8}$-BPS attractors $\mathcal{M}_{\frac{1}{8}-BPS,%
\text{large}}$ is a symmetric quaternionic space of real dimension
$40$, given by Eqs. (\ref{pre-new-rel-2})-(\ref{new-rel-2}) below.
In the maximal symmetric embedding
\cite{ADF-U-duality-d=4,Ferrara-Marrani-1,CFGM1}
\begin{equation}
\begin{array}{l}
SU\left( 8\right) \supsetneq SU\left( 6\right) \times SU\left(
2\right)
\times U\left( 1\right)  \\
\\
\mathbf{70}\rightarrow \left( \mathbf{15},\mathbf{1}\right)
_{-2}+\left(
\overline{\mathbf{15}},\mathbf{1}\right) _{2}+\left( \mathbf{20},\mathbf{2}%
\right) _{0},
\end{array}
\end{equation}
the $40$ real scalar degrees of freedom pertaining to $\mathcal{M}_{\frac{1}{%
8}-BPS,\text{large}}$ fit into the $\left(
\mathbf{20},\mathbf{2}\right) _{0} $ irrepr. of $SU\left( 6\right)
\times SU\left( 2\right) \times U\left( 1\right) $, where
$\mathbf{15}$, $\overline{\mathbf{15}}$ and $\mathbf{20}$
respectively are the rank-$2$ antisymmetric, complex conjugate
rank-$2$ antisymmetric, and rank-$3$ antisymmetric irreprs. of
$SU\left( 6\right) $, and the subscripts denote the $U\left(
1\right) $ charges. Thus, by specifying (\ref{origin}), both the
$30$ stabilized scalar degrees of
freedom (coordinatizing the complementary to $\mathcal{M}_{\frac{1}{8}-BPS,%
\text{large}}$ in $M_{\mathcal{N}=8,d=4}$) and the $40$
\textit{would-be unstabilized} scalar degrees of freedom (coordinatizing the submanifold $%
\mathcal{M}_{\frac{1}{8}-BPS,\text{large}}\subsetneq
M_{\mathcal{N}=8,d=4}$)
are set to zero. Such a ``polarization'' of the \textit{moduli space} $%
\mathcal{M}_{\frac{1}{8}-BPS,\text{large}}$ implies the phase
$\varphi $ to be actually \textit{determined} in terms of relevant
charges, as given by Eq. (\ref{n-1}).

Furthermore, the maximal compact symmetry exhibited by $Z_{ij,\frac{1}{8}%
-BPS,\text{large}}$ given by Eq. (\ref{Z-BPS-large}) (or equivalently,
through Eq. (\ref{Z}), by $Q_{ij}$ given by Eqs. (\ref{Q1}) and (\ref{Q2}))
is
\begin{equation}
SU\left( 6\right) \times SU\left( 2\right) =mcs\left( E_{6\left( 2\right)
}\right) ,  \label{1/8-BPS-large-mcs}
\end{equation}
(see \textit{e.g.} \cite{Gilmore}). In other words, $Q_{ij}$ given by Eqs. (%
\ref{Q1}) and (\ref{Q2})) is a representative of the $\frac{1}{8}$-BPS
``large'' (\textit{i.e.} attractive) orbit of the $\mathbf{56}$ fundamental
representation space of $E_{7\left( 7\right) }$, exhibiting the maximal
(compact) symmetry (\ref{1/8-BPS-large-mcs}) \cite{FG,LPS-1}:
\begin{equation}
Q_{ij}\in \mathcal{O}_{\frac{1}{8}-BPS,\text{large}}=\frac{E_{7\left(
-7\right) }}{E_{6\left( 2\right) }}.  \label{Q-1/8-BPS-large}
\end{equation}

Notice that $\mathcal{Q}$ and $\overline{\mathcal{Q}}$ are charged with
respect to the $U\left( 1\right) $ in the extreme right-hand side of the
following chain of maximal symmetric group embeddings (see \textit{e.g.}
\cite{Gilmore})
\begin{equation}
E_{7\left( 7\right) }\overset{mcs}{\supsetneq }SU\left( 8\right) \supsetneq
SU\left( 6\right) \times SU\left( 2\right) \times U\left( 1\right) .
\label{emb-1a}
\end{equation}
Indeed, $\mathcal{Q}$ and $\overline{\mathcal{Q}}$ respectively are the $%
\left( SU\left( 6\right) \times SU\left( 2\right) \right) $-singlets $\left(
\mathbf{1},\mathbf{1}\right) _{-3}$ and $\left( \mathbf{1},\mathbf{1}\right)
_{+3}$ in the subsequent decomposition of the fundamental irrepr. $\mathbf{56%
}$ of $E_{7\left( 7\right) }$ \cite{CFGM1}:
\begin{eqnarray}
E_{7\left( 7\right) } &\rightarrow &SU\left( 8\right) \rightarrow SU\left(
6\right) \times SU\left( 2\right) \times U\left( 1\right) ;  \notag \\
&&  \notag \\
\mathbf{56} &\rightarrow &\mathbf{28}+\overline{\mathbf{28}}\rightarrow
\left( \mathbf{15},\mathbf{1}\right) _{+1}+\left( \mathbf{6},\mathbf{2}%
\right) _{-1}+\left( \mathbf{1},\mathbf{1}\right) _{-3}+  \notag \\
&&+\left( \overline{\mathbf{15}},\mathbf{1}\right) _{-1}+\left( \overline{%
\mathbf{6}},\mathbf{2}\right) _{+1}+\left( \mathbf{1},\mathbf{1}\right)
_{+3},  \label{BPS-decomp}
\end{eqnarray}
where the subscripts denote the charge with respect to $U\left( 1\right) $.
Thus, by suitably renormalizing the $U\left( 1\right) $-phase $\Phi $, $%
\mathcal{Q}$ and\ $\overline{\mathcal{Q}}$ transform under considered $%
U\left( 1\right) $ respectively as follows:
\begin{equation}
U\left( 1\right) :\left\{
\begin{array}{l}
\mathcal{Q}\longrightarrow \mathcal{Q}e^{-i\Phi }; \\
\\
\overline{\mathcal{Q}}\longrightarrow \overline{\mathcal{Q}}e^{i\Phi },
\end{array}
\right.  \label{U(1)}
\end{equation}
or, equivalently, in the $\mathbf{2}$ of $SO\left( 2\right) $:
\begin{equation}
SO\left( 2\right) :\left(
\begin{array}{c}
q \\
p
\end{array}
\right) \longrightarrow \left(
\begin{array}{cc}
\cos \Phi & \sin \Phi \\
-\sin \Phi & \cos \Phi
\end{array}
\right) \left(
\begin{array}{c}
q \\
p
\end{array}
\right) .  \label{SO(2)}
\end{equation}
Thus, the $U\left( 1\right) $-transformation properties of $\rho _{BPS}$ and
$\psi _{Q_{12}}$, given by Eqs. (\ref{rho-BPS}) and (\ref{n-1}),
respectively are
\begin{eqnarray}
\rho _{BPS} &=&\sqrt{\mathcal{Q}\overline{\mathcal{Q}}}\longrightarrow \rho
_{BPS}; \\
\psi _{Q_{12}} &\longrightarrow &\arctan \left( \frac{p}{q}\right) -\Phi .
\label{PSI-U(1)}
\end{eqnarray}
As a consequence, $\rho _{BPS}$ is invariant under $SU\left( 6\right) \times
SU\left( 2\right) \times U(1)$, whereas $\psi _{Q_{12}}$ only under $%
SU\left( 6\right) \times SU\left( 2\right) $. Notice that the fact that $%
\psi _{Q_{12}}$ is not invariant under the considered $U\left(
1\right) $ (as given by Eq. (\ref{PSI-U(1)})) does not affect the
attractor nature of the origin (\ref{origin}) of $\frac{E_{7\left(
7\right) }}{SU\left( 8\right) }$, because this point is $SU\left(
8\right)$-invariant (see Footnote 12). As mentioned in the
Introduction (see also final Sect. \ref{Conclusion}), the $U\left(
1\right) $ under consideration should be seen as a residual
``degeneracy symmetry'' at $\frac{1}{8}$-BPS attractor solutions in
the Lie algebra limit (namely, at the origin (\ref{origin}) of the scalar
manifold $\frac{E_{7\left( 7\right) }}{SU\left( 8\right) }$).


\subsection{\label{Iwa}Iwasawa Solutions}

We are now going to analyze the consequences of the above reasoning for the
explicit construction \textit{\`{a} la Adams-Iwasawa} performed in Subsect.
\ref{Lie-Algebra}.

Thus, we start and study how the $1/8$-BPS Attractor Eqs. can be implemented
in the Iwasawa construction performed above, confining ourselves to the
origin (\ref{origin}) of $\frac{E_{7\left( 7\right) }}{SU\left( 8\right) }$
as attractor solution. {Therefore, we have to set all scalar fields to zero
in the parametrization (\ref{W}) of the coset representative }$\mathbf{C}$.
Since the Attractor Eqs. are nothing but criticality conditions for the
effective black hole potential $V_{BH}$ \cite{FGK}, namely
\begin{equation}
\partial _{\phi }V_{BH}=0:\left. V_{BH}\right| _{\partial _{\phi
}V_{BH}=0}\neq 0,  \label{AEs}
\end{equation}
this implies that one only needs to compute the terms of $V_{BH}$ which are
\textit{linear} in the scalar fields. To the first order in scalar fields,
the coset representative $\mathbf{C}$ (\ref{W}) reads ($\mathbf{I}_{56}$
denoting the $56\times 56$ identity)
\begin{equation}
\mathbf{C}\left( x^{\alpha },x^{ij},x^{I}\right) =\mathbf{I}%
_{56}+\sum_{\alpha =1}^{7}x^{\alpha }h_{D_{\alpha
}}+\sum_{i<j=2}^{8}x^{ij}J_{ij}^{+}+\sum_{I\in \mathcal{I}_{8}}x^{I}\mathcal{%
J}_{I}^{+}+\mathcal{O}\left( \left\{ x^{\alpha },x^{ij},x^{I}\right\}
^{2}\right) .  \label{C-first-order}
\end{equation}
Thus, by neglecting $\mathcal{O}\left( \left\{ x^{\alpha
},x^{ij},x^{I}\right\} ^{2}\right) $, Eq. (\ref{W}) yields that
\begin{eqnarray}
(W_{1})_{ij}^{\ \ mn} &=&\mathbf{I}+\sum_{\alpha =1}^{7}x^{\alpha
}(D_{\alpha }^{i}+D_{\alpha }^{j})\delta
_{ij}^{mn}+\sum_{k<l=2}^{8}x^{kl}(J_{kl}^{+})_{ij}^{~~mn};  \label{expr-1} \\
(W_{2})_{\ \ mn}^{ij} &=&\mathbf{I}-\sum_{\alpha =1}^{7}x^{\alpha
}(D_{\alpha }^{i}+D_{\alpha }^{j})\delta
_{mn}^{ij}-\sum_{k<l=2}^{8}x^{kl}(J_{kl}^{+T})_{~~mn}^{ij};  \label{expr-2}
\\
\left( V_{1}\right) _{ij\mid mn} &=&\sum_{I\in \mathcal{I}_{8}}x^{I}\epsilon
_{Iijmn};  \label{expr-3} \\
\left( V_{2}\right) ^{ij\mid mn} &=&\sum_{I\in \mathcal{I}_{8}}x^{I}\epsilon
_{I}^{ijmn}.  \label{expr-4}
\end{eqnarray}
By plugging results (\ref{expr-1})-(\ref{expr-4}) into Eq. (\ref{bhp}) and
evaluating the resulting Attractor Eqs. (\ref{AEs}), the following system is
obtained:
\begin{equation}
\left\{
\begin{array}{l}
(h_{D_{\alpha }})_{\ \ rs}^{mn}q_{mn}p^{rs}=0; \\
\\
(J_{kl}^{+}+J_{kl}^{+T})_{\ \ rs}^{mn}q_{mn}p^{rs}=0; \\
\\
(\epsilon +\delta )_{I}^{\ mnrs}q_{mn}q_{rs}-(\epsilon +\delta
)_{Imnrs}p^{mn}p^{rs}=0.
\end{array}
\right.  \label{Iwa-AEs}
\end{equation}
By introducing complex charges $Q^{ij}$ (recall Eqs. (\ref{Q1}), (\ref{Q2}))
such that
\begin{equation}
q_{ij}=\frac{1}{\sqrt{2}}(Q^{ij}+\bar{Q}^{ij});\qquad p_{ij}=\frac{1}{i\sqrt{%
2}}(Q^{ij}-\bar{Q}^{ij}),
\end{equation}
the system (\ref{Iwa-AEs}) can be recast in the following form:
\begin{equation}
\left\{
\begin{array}{l}
I.~~(h_{D_{\alpha }})_{mn|rs}(Q^{mn}Q^{rs}-\bar{Q}^{mn}\bar{Q}^{rs})=0; \\
\\
II.~~(J_{kl}^{+}+J_{kl}^{+T})_{mn|rs}(Q^{mn}Q^{rs}-\bar{Q}^{mn}\bar{Q}%
^{rs})=0; \\
\\
III.~~(\epsilon +\delta )_{Imnrs}(Q^{mn}Q^{rs}+\bar{Q}^{mn}\bar{Q}^{rs})=0,
\end{array}
\right.  \label{Iwa-AEs-2}
\end{equation}
where the symmetry of the matrices involved was used. The system (\ref
{Iwa-AEs-2}) is nothing but the set of (necessarily $\frac{1}{8}$-BPS)
Attractor Eqs. consistent with the origin (\ref{origin}) of $\frac{%
E_{7\left( 7\right) }}{SU\left( 8\right) }$. Thus, Eqs. (\ref{Iwa-AEs-2})
express conditions on the complex dyonic charges $Q^{ij}$ such that
\begin{equation}
\phi _{ijkl}=0\Leftrightarrow \left\{
\begin{array}{l}
x^{\alpha }=0; \\
x^{ij}=0; \\
x^{I}=0
\end{array}
\right.  \label{origin-2}
\end{equation}
is a (necessarily $\frac{1}{8}$-BPS) attractor scalar configuration in the
background of a static, spherically symmetric, asymptotically flat extremal
black hole of $\mathcal{N}=8$, $d=4$ (\textit{ungauged}) supergravity.

Consistent with Eq. (\ref{Q1}), we look for solutions of (\ref{Iwa-AEs-2})
within the following structural \textit{Ansatz}:
\begin{equation}
Q=e^{i\varphi /4}\text{diag}\left( r,0,0,0\right) \otimes \epsilon ,
\label{sol-Ansatz}
\end{equation}
with $r\in \mathbb{R}_{0}^{+}$ and $\varphi \in \left[ 0,8\pi \right) $. By
recalling Eq. (\ref{Q-1/8-BPS-large}), $Q$ is a singlet of $SU(2)\times
SU(6) $, stabilizer of the $\frac{1}{8}$-BPS ``large'' orbit $\mathcal{O}_{%
\frac{1}{8}-BPS,\text{large}}$ \cite{FG,LPS-1}. By inserting (\ref
{sol-Ansatz}) into the system (\ref{Iwa-AEs-2}), Eqs. $II$\ and $III$\ are
automatically satisfied, whereas Eqs. $I$ yield the conditions
\begin{equation}
(D_{\alpha }^{1}+D_{\alpha }^{2})\sin \frac{\varphi }{2}=0,~~\forall \alpha
=1,...,7.  \label{sol-cond}
\end{equation}
Note that $D_{\alpha }^{1}+D_{\alpha }^{2}$ cannot vanish for all $\alpha $%
's, because $D_{\alpha }$'s defined by (\ref{D-Tr}) are a basis for the $%
8\times 8$ diagonal traceless matrices with real entries: if $D_{\alpha
}^{1}+D_{\alpha }^{2}=0$ $\forall \alpha =1,...,7$, then the $D_{\alpha }$'s
would not generate, \textit{e.g.}, the diagonal traceless matrix $h=$diag$%
\{1,1,-2,0,0,0,0,0\}$.

Thus, within the \textit{Ansatz} (\ref{sol-Ansatz}) exhibiting the maximal
(compact) symmetry $SU(2)\times SU(6)$, the solution of conditions (\ref
{sol-cond}), and thus of Attractor Eqs. (\ref{Iwa-AEs-2}), reads
\begin{equation}
\left\{
\begin{array}{l}
r\in \mathbb{R}_{0}; \\
\\
\varphi =2n\pi ,~n\in \mathbb{Z.}
\end{array}
\right.  \label{sol-AEs}
\end{equation}
Consequently, two kinds of solutions are obtained:

\begin{enumerate}
\item  the \textit{purely electric} solution:
\begin{equation}
\left\{
\begin{array}{l}
r=q; \\
\\
\varphi =4k\pi ,~k\in \mathbb{Z.}
\end{array}
\right.  \label{el-sol}
\end{equation}

\item  the \textit{purely magnetic} solution:
\begin{equation}
\left\{
\begin{array}{l}
r=p; \\
\\
\varphi =2\left( 2k+1\right) \pi ,~k\in \mathbb{Z.}
\end{array}
\right.  \label{magn-sol}
\end{equation}
\end{enumerate}

\subsection{\label{Breaking}Analysis of Solutions and Breaking $U\left(
1\right) \longrightarrow \mathbb{Z}_{4}$}

In order to analyze the obtained solutions (\ref{el-sol}) and (\ref{magn-sol}%
), which support the origin (\ref{origin-2}) as an $\frac{1}{8}$-BPS
attractor, let us consider the following charge configuration:
\begin{equation}
\widetilde{Z}_{ij,\frac{1}{8}-BPS,\text{large}}=\left(
\begin{array}{cccc}
\frac{1}{\sqrt{2}}q &  &  &  \\
& 0 &  &  \\
&  & 0 &  \\
&  &  & 0
\end{array}
\right) \otimes \epsilon .  \label{Z-part}
\end{equation}
By recalling Eqs. (\ref{Z-BPS-large}), (\ref{rho-BPS}), (\ref{n-1}), this is
an$\frac{1}{8}$-BPS attractor solution at the origin (\ref{origin}) of $%
\frac{E_{7\left( 7\right) }}{SU\left( 8\right) }$, with
\begin{eqnarray}
\widetilde{\rho }_{BPS} &=&\frac{\left| q\right| }{\sqrt{2}}; \\
\frac{\widetilde{\varphi }}{4} &=&\widetilde{\psi }_{Q_{12}}=\left\{
\begin{array}{l}
0,~q>0; \\
\\
\pi ,~q<0.
\end{array}
\right.
\end{eqnarray}
\bigskip Thus, it is of purely electric nature, namely of the kind (\ref
{el-sol}) obtained above.

It should be stressed that (\ref{Z-part}) does \textit{not} exhibit the
symmetry enhancement
\begin{equation}
SU\left( 6\right) \times SU\left( 2\right) \longrightarrow SU\left( 6\right)
\times SU\left( 2\right) \times U\left( 1\right) ,
\label{no-symm-enhancem-1}
\end{equation}
despite the fact that, consistent with Eqs. (\ref{Q1})-(\ref{Q2}), it is
supported by
\begin{eqnarray}
\widetilde{Q}_{ij} &=&\left(
\begin{array}{cc}
\widetilde{\mathcal{Q}} & 0 \\
0 & 0_{3\times 3}
\end{array}
\right) \otimes \epsilon ; \\
\widetilde{\mathcal{Q}} &\equiv &\frac{1}{\sqrt{2}}q\in \mathbb{R}.
\end{eqnarray}
Indeed, under the commuting $U\left( 1\right) $ in the r.h.s. of (\ref
{no-symm-enhancem-1}), the explicit expression of $\widetilde{\rho }_{BPS}$
changes as follows:
\begin{equation}
\widetilde{\rho }_{BPS}\longrightarrow \frac{\left| q\cos \Phi -iq\sin \Phi
\right| }{\sqrt{2}},  \label{P-1}
\end{equation}
although its value remains the same because
\begin{equation}
\left| q\cos \Phi -iq\sin \Phi \right| =\left| q\right| .
\end{equation}
Eq. (\ref{P-1}) is a trivial consequence of the transformation described by
Eq. (\ref{SO(2)}):
\begin{equation}
SO\left( 2\right) :\left(
\begin{array}{c}
q \\
0
\end{array}
\right) \longrightarrow \left(
\begin{array}{cc}
\cos \Phi & \sin \Phi \\
-\sin \Phi & \cos \Phi
\end{array}
\right) \left(
\begin{array}{c}
q \\
0
\end{array}
\right) =\left(
\begin{array}{c}
q\cos \Phi \\
-q\sin \Phi
\end{array}
\right) \equiv \left(
\begin{array}{c}
q^{\prime } \\
p^{\prime }
\end{array}
\right) .
\end{equation}
On the other hand, $\widetilde{\psi }_{Q_{12}}$ transforms as follows
(recall Eq. (\ref{PSI-U(1)})):
\begin{equation}
\widetilde{\psi }_{Q_{12}}\longrightarrow \widetilde{\psi }_{Q_{12}}^{\prime
}\equiv \arctan \left( \frac{p^{\prime }}{q^{\prime }}\right) =-\Phi ~\left(
+k\pi ,~k\in \mathbb{Z}\right) .  \label{PSI-U(1)-ours}
\end{equation}
\medskip

Through this reasoning, we now address the following question: \textit{why
are the Iwasawa }$\frac{1}{8}$\textit{-BPS attractor solutions (\ref{el-sol}%
) and (\ref{magn-sol}) respectively purely electric and purely magnetic, and
thus non-dyonic?}

As we will see below, the non-dyonicity of Iwasawa solutions (\ref{el-sol})
and (\ref{magn-sol}) is intrinsic to the Adams-Iwasawa construction approach
to the $\frac{E_{7\left( 7\right) }}{SU\left( 8\right) }$-coset
representative, exploited in\ Sect. \ref{Adams-Iwasawa}. Indeed, such a
construction explicitly breaks the residual $U\left( 1\right) $-symmetry
(parametrizing the dyonic nature of $\frac{1}{8}$-BPS attractor solutions at
the coset origin (\ref{origin-2}); recall Eqs. (\ref{U(1)})-(\ref{PSI-U(1)}%
)) into a discrete subgroup $\mathbb{Z}_{4}$. Thus, as an intrinsic feature
of the Iwasawa parametrization we performed, the residual ``degeneracy
symmetry'' $U\left( 1\right) $ of the charge orbit $\mathcal{O}_{\frac{1}{8}%
-BPS,\text{large}}$ (\ref{Q-1/8-BPS-large}) supporting the origin (\ref
{origin-2}) of the coset $\frac{E_{7\left( 7\right) }}{SU\left( 8\right) }$
as a $\frac{1}{8}$-BPS attractor gets broken, and discretized, to a finite
subgroup $\mathbb{Z}_{4}$.

In order to show this, let us consider Eqs. (\ref{U(1)})-(\ref{PSI-U(1)}).
These Eqs. express the action of the $U\left( 1\right) $ symmetry, given by
the extra commuting $U\left( 1\right) $-factor in the second group embedding
of (\ref{BPS-decomp}):
\begin{equation}
SU(2)\times SU(6)\times U(1)_{\mathcal{A}}\subsetneq _{symm}^{\max }SU(8).
\label{U(1)_A}
\end{equation}
which is maximal (``$\max $'') and symmetric (``$symm$''). We denote such an
extra commuting $U\left( 1\right) $-factor with a subscript ``$\mathcal{A}$%
'', in order to discriminate it from other $U\left( 1\right) $'s we will
consider below. (\ref{U(1)_A}) means that the maximal compact symmetry of $%
\mathcal{O}_{\frac{1}{8}-BPS,\text{large}}$ (\ref{Q-1/8-BPS-large}) embeds
into the maximal compact (local) symmetry of the non-linear sigma model of
scalars (and maximal compact \textit{on-shell} symmetry of the whole $%
\mathcal{N}=8$, $d=4$ \textit{ungauged} Lagrangian density, as well; see the
end of Subsect. \ref{Iwasawa}) through an extra commuting $U\left( 1\right)
_{\mathcal{A}}$. Such a $U\left( 1\right) _{\mathcal{A}}$ is a residual
\textit{``degeneracy symmetry''}, in the sense that it spans all possible
charge configurations supporting $\frac{1}{8}$-BPS attractor points in $%
\frac{E_{7\left( 7\right) }}{SU\left( 8\right) }$ (these latter are
exemplified by the manifestly $SU\left( 8\right) $-symmetric\footnote{%
More precisely, the origin $\phi _{ijkl}=0$ (or equivalently (\ref{origin-2}%
)) is the \textit{unique} $SU\left( 8\right) $-symmetric point of $\frac{%
E_{7\left( 7\right) }}{SU\left( 8\right) }$, such that its symmetry $%
SU\left( 8\right) $ coincides with the stabilizer of the coset itself. In
other words, consistent with the geometric construction of the non-compact,
irreducible, Riemannian, globally symmetric coset $\frac{E_{7\left( 7\right)
}}{SU\left( 8\right) }$, the action of the stabilizer $SU\left( 8\right) $
has a \textit{unique} fixed point, \textit{i.e.} the origin $\phi _{ijkl}=0$
of the coset itself.} point of the scalar manifold (\ref{scalar-manifold}),
namely by the origin (\ref{origin-2})).

However, as discussed at the start of Sect. \ref{N=8,d=4-sugra}, as well as
in Subsect. \ref{Scalar-Covariance} (recall Eq. (\ref{br-1})), the performed
Adams-Iwasawa construction of the $\frac{E_{7\left( 7\right) }}{SU\left(
8\right) }$-coset representative $\mathbf{C}$ (\ref{W}) explicitly breaks
the maximal covariance from $SL(8,\mathbb{R})$ down to $SL(7,\mathbb{R})$.
Through the Cayley-like transformation (\ref{our-Cayley}), yielding to the
manifestly $SU\left( 8\right) $-covariant $\frac{E_{7\left( 7\right) }}{%
SU\left( 8\right) }$-coset representative $\mathcal{V}$ (\ref{V-call}), the
breaking (\ref{br-1}) becomes
\begin{equation}
SU(7)\times U(1)_{\mathcal{E}}\subsetneq _{symm}^{\max }SU(8).
\label{br-1-c}
\end{equation}
The $\frac{1}{8}$-BPS ``large'' charge orbit $\mathcal{O}_{\frac{1}{8}-BPS,%
\text{large}}$ (\ref{Q-1/8-BPS-large}) gives rise to a further decomposition
into $SU(6)$, and thus the embedding (\ref{br-1}) can be completed as
follows:
\begin{equation}
SU(6)\times U(1)_{\mathcal{B}}\times U(1)_{\mathcal{E}}\subsetneq
_{symm}^{\max }SU(7)\times U(1)_{\mathcal{E}}\subsetneq _{symm}^{\max }SU(8).
\label{br-1-c-complete}
\end{equation}

The symmetry breaking (\ref{br-1}), intrinsic to the performed Iwasawa
parametrization (exploited by singling the Cartan subalgebra $\mathcal{C}$ (%
\ref{C-def}) out), is crucial. Indeed, by recalling Eqs. (\ref{Q1})-(\ref{Q2}%
), the embeddings (\ref{U(1)_A}) and (\ref{br-1-c-complete}) necessarily
lead to conclude that (\textit{if any}) the residual symmetry group $\Gamma $
of $\frac{1}{8}$-BPS attractors in the Adams-Iwasawa construction performed
above must be given by the intersection:
\begin{equation}
\Gamma \equiv U(1)_{\mathcal{A}}\cap U(1)_{\mathcal{E}}.  \label{Gamma}
\end{equation}

We now show that $\Gamma $ actually is a $\mathbb{Z}_{4}$ discrete finite
group.

In order to do this, it should be remarked that the original $SL(8,\mathbb{R}%
)$ group on which Adams' construction is based (see Subsect. \ref
{Lie-Algebra}), is broken, by the considered Iwasawa parametrization (see
Subsect. \ref{Iwasawa}), down to $SL(7,\mathbb{R})\times SO(1,1)$ (as given
by (\ref{br-1})), where the generator of $SO\left( 1,1\right) $ can be
written as ($\mathbf{I}_{2}$, $\mathbf{I}_{6}$ and $\mathbf{I}_{7}$
respectively denoting the $2\times 2$, $6\times 6$ and $7\times 7$ identity)
\begin{equation}
\tau \equiv \left(
\begin{array}{cc}
\mathbf{I}_{7} & 0 \\
0 & -7
\end{array}
\right) .  \label{SO(1,1)}
\end{equation}
After a suitable Cayley transformation\footnote{%
To be more precise, $\mathbf{R}\equiv R_{1-4}R_{2-5}\mathbf{R}_{0}$, where $%
R_{i-j}$ is the rotation of $\pi /2$ of the $i-j$ plane and $\mathbf{R}_{0}$
is the $8\times 8$ Cayley transformation (also recall Footnote 11)
\begin{equation*}
\mathbf{R}_{0}\equiv \frac{1}{\sqrt{2}}\left(
\begin{array}{cc}
\mathbf{I}_{4} & -i\mathbf{I}_{4} \\
-i\mathbf{I}_{4} & \mathbf{I}_{4}
\end{array}
\right)
\end{equation*}
} $\mathbf{R}$, it takes the form:
\begin{eqnarray}
\tilde{\tau} &\equiv &\mathbf{R}\tau \mathbf{R}^{-1}=\left(
\begin{array}{cc}
\tau _{\mathcal{E}} & 0 \\
0 & \mathbf{I}_{6}
\end{array}
\right) ;  \label{our-Cayley-2} \\
\mathbf{R} &\equiv &\frac{1}{\sqrt{2}}\left(
\begin{array}{cccccccc}
0 & 0 & 0 & 1 & 0 & 0 & 0 & -i \\
0 & 0 & 0 & -i & 0 & 0 & 0 & 1 \\
0 & 0 & 1 & 0 & 0 & 0 & -i & 0 \\
-1 & 0 & 0 & 0 & i & 0 & 0 & 0 \\
0 & -1 & 0 & 0 & 0 & i & 0 & 0 \\
0 & -i & 0 & 0 & 0 & 1 & 0 & 0 \\
0 & 0 & -i & 0 & 0 & 0 & 1 & 0 \\
i & 0 & 0 & 0 & -1 & 0 & 0 & 0
\end{array}
\right) ; \\
\tau _{\mathcal{E}} &\equiv &\left(
\begin{array}{cc}
-3 & 4i \\
-4i & -3
\end{array}
\right) .  \label{tau-E-call}
\end{eqnarray}
On the other hand, the decomposition of the fundamental irrepr. $\mathbf{8}$
of $SU(8)$ under (\ref{U(1)_A}) reads (subscripts denote charges with
respect to $U(1)_{\mathcal{A}}$):
\begin{equation}
\begin{array}{l}
SU(8)\longrightarrow SU(2)\times SU(6)\times U(1)_{\mathcal{A}} \\
\mathbf{8}\longrightarrow \left( \mathbf{1},\mathbf{6}\right) _{1}+\left(
\mathbf{2},\mathbf{1}\right) _{-3},
\end{array}
\end{equation}
so that the generator of $U(1)_{\mathcal{A}}$ can be written as
\begin{equation}
\tau _{\mathcal{A}}\equiv \left(
\begin{array}{cc}
-3 & 0 \\
0 & -3
\end{array}
\right) .  \label{tau-A-call}
\end{equation}
Thus, the group $\Gamma $ defined by (\ref{Gamma}) is explicitly determined
by the solution of the following $2\times 2$ matrix Eq.:
\begin{equation}
\exp (is\tau _{\mathcal{A}})=\exp (is\tau _{\mathcal{E}}).  \label{cond}
\end{equation}
By using definitions (\ref{tau-E-call}) and (\ref{tau-A-call}), Eq. (\ref
{cond}) can be explicited as follows:
\begin{equation}
\mathbf{I}_{2}=\exp (-i4s\sigma _{2})=\cos (4s)1_{2\times 2}-i\sin
(4s)\sigma _{2},  \label{cond-expl}
\end{equation}
where $\sigma _{2}$ is the second Pauli matrix
\begin{equation}
\sigma _{2}\equiv \left(
\begin{array}{cc}
0 & -i \\
i & 0
\end{array}
\right) .
\end{equation}
The solution of Eq. (\ref{cond-expl}) reads
\begin{equation}
s=\kappa \frac{\pi }{2},~\kappa \in \mathbb{Z},  \label{sol-cond-expl}
\end{equation}
implying that
\begin{equation}
\Gamma =\langle \exp (-3i\kappa \frac{\pi }{2}\ \mathbf{I}_{2})\rangle
_{\kappa \in \mathbb{Z}}\equiv \mathbb{Z}_{4},  \label{Gamma-expl}
\end{equation}
as claimed.\newline

The residual symmetry $\Gamma $ (\ref{Gamma-expl}) of $\frac{1}{8}$-BPS
attractors within Adams-Iwasawa approach to $\mathcal{N}=8$, $d=4$ \textit{%
ungauged} supergravity performed above is consistent with the fact that the
charge configurations found to support the manifestly $SU\left( 8\right) $%
-invariant representative point (\ref{origin-2}) as a $\frac{1}{8}-$BPS
attractor are \textit{not} dyonic, but rather \textit{purely electric} (\ref
{el-sol}) or \textit{purely magnetic} (\ref{magn-sol}).

This has the following physical interpretation: the Iwasawa parametrization (%
\ref{coset}) is not as symmetric as the Cartan construction \cite{dWN}, but
it nevertheless remarkably provides the \textit{purely electric} / \textit{%
purely magnetic} disentangling of generically dyonic $\frac{1}{8}$-BPS
``large'' $d=4$ extremal black holes\footnote{%
However, it should be pointed out that the \textit{non-dyonic} (namely,
\textit{purely electric} or \textit{purely magnetic}) $\frac{1}{8}$-BPS
attractor solutions with $\phi _{ijkl}=0$ obtained in $d=4$ (see Eqs. (\ref
{el-sol}) and (\ref{magn-sol}), respectively) do \textit{not} necessarily
uplift to $d=5$ extremal (\textit{electric}) black holes or extremal (%
\textit{magnetic}) black strings, respectively. Indeed, the Iwasawa
symplectic basis considered in the present paper does \textit{not} coincide
with the Sezgin-Van Nieuwenhuizen \cite{SVN} symplectic frame, in which the
maximal non-compact covariance is nothing but the $d=5$ $U$-duality group $%
E_{6\left( 6\right) }$.
\par
Clearly, these two symplectic bases are related through a finite symplectic
transformation ($\in Sp\left( 56,\mathbb{R}\right) $, in the semi-classical
regime of large, continuous charges). We leave the determination of such a
transformation, along with the study of the $d=5$ uplifts of the obtained
Iwasawa \textit{non-dyonic} $d=4$ $\frac{1}{8}$-BPS ``large'' solutions, as
an issue for future investigations.}. In this sense, by recalling Eqs. (\ref
{Q1})-(\ref{Q2}) and solutions (\ref{el-sol})-(\ref{magn-sol}), the
performed intersection (\ref{Gamma}) can be interpreted as a constraint
selecting the real (or, equivalently, imaginary) part of the $\frac{1}{8}$%
-BPS ``large'' charge orbit $\mathcal{O}_{\frac{1}{8}-BPS,\text{large}}$ (%
\ref{Q-1/8-BPS-large}).\newline


\section{\label{d=4-d=5}Comments on $d=4$/$d=5$ Relations}

The present Section is devoted to consider some relations between maximal
supergravities in $d=4$ and $d=5$, at the level of scalar manifolds ($M$),
``large'' orbits ($\mathcal{O}$) and \textit{moduli spaces} of attractors ($%
\mathcal{M}$).

While many relations have been derived in
\cite{ADF-U-duality-d=4,Ferrara-Marrani-2,ICL-1}, and detailed
treatments of related issues are given in \cite{CFGn-1} and
\cite{ICL-1}, we believe that the relations to $d=3$ theories given
below can hint some further interesting developments, especially in
light of very recent and intriguing advances (see \textit{e.g.}
\cite{d=3-recent}).

\begin{itemize}
\item  The scalar manifold and the unique ``large'' charge orbit (with
related \textit{moduli space}) of $\mathcal{N}=8$, $d=5$ \textit{ungauged}
supergravity respectively read
\begin{eqnarray}
M_{\mathcal{N}=8,d=5} &=&\frac{E_{6\left( 6\right) }}{USp\left( 8\right) };
\\
\mathcal{O}_{\frac{1}{8}-BPS} &=&\frac{E_{6\left( 6\right) }}{F_{4\left(
4\right) }}; \\
\mathcal{M}_{\frac{1}{8}-BPS} &=&\frac{F_{4\left( 4\right) }}{USp\left(
6\right) \times USp\left( 2\right) }= \\
&=&M_{\mathcal{N}=4,J_{3}^{\mathbb{R}},d=3}=c\left( \frac{Sp\left( 6,\mathbb{%
R}\right) }{SU\left( 3\right) \times U\left( 1\right) }\right) ,
\label{new-rel-1}
\end{eqnarray}
with ``$c$'' denotes the $c$-map \cite{CFG}.

\item  The scalar manifold and the ``large'' charge orbits of $\mathcal{N}=8$%
, $d=4$ \textit{ungauged} supergravity are respectively given by Eqs. (\ref
{scalar-manifold}), (\ref{Q-1/8-BPS-large}) and (\ref{Q-nBPS-large}). The
related \textit{moduli spaces} of attractors respectively read
\begin{eqnarray}
\mathcal{M}_{\frac{1}{8}-BPS,\text{large}} &=&\frac{E_{6\left( 2\right) }}{%
SU\left( 6\right) \times SU\left( 2\right) }= \label{pre-new-rel-2}\\
&=&M_{\mathcal{N}=4,J_{3}^{\mathbb{C}},d=3}=c\left( \frac{SU\left(
3,3\right) }{SU\left( 3\right) \times SU\left( 3\right) \times
U\left(
1\right) }\right) ;  \label{new-rel-2} \\
\mathcal{M}_{nBPS} &=&\frac{E_{6\left( 6\right) }}{USp\left( 8\right) }=M_{%
\mathcal{N}=8,d=5}.
\end{eqnarray}
\end{itemize}

\noindent

Both $\frac{1}{8}$-BPS and non-BPS $d=4$ extremal black hole attractors are
descendants of the $\frac{1}{8}$ -BPS extremal $d=5$ (black hole \textit{or}
black string) attractors. Thus, it holds that
\begin{eqnarray}
\mathcal{M}_{\frac{1}{8}-BPS,d=5} &\subsetneq &\left( \mathcal{M}%
_{nBPS,d=4}\cap \mathcal{M}_{\frac{1}{8}-BPS,\text{large},d=4}\right) ;
\label{emb-1} \\
&\Updownarrow &  \notag \\
M_{\mathcal{N}=4,J_{3}^{\mathbb{R}},d=3} &\subsetneq &\left( M_{\mathcal{N}%
=8,d=5}\cap M_{\mathcal{N}=4,J_{3}^{\mathbb{C}},d=3}\right) ;  \label{emb-2}
\\
&\Updownarrow &  \notag \\
\frac{F_{4\left( 4\right) }}{USp\left( 6\right) \times USp\left( 2\right) }
&\subsetneq &\left( \frac{E_{6\left( 6\right) }}{USp\left( 8\right) }\cap
\frac{E_{6\left( 2\right) }}{SU\left( 6\right) \times SU\left( 2\right) }%
\right) .  \label{emb-3}
\end{eqnarray}
While (\ref{emb-1}) (or explicitly (\ref{emb-3})) has been obtained in \cite
{ICL-1}, its re-interpretation (\ref{emb-2}) is new. It involves the
quaternionic scalar manifolds of $\mathcal{N}=4$, $d=3$ ``magic''
supergravity theories based on the rank-$3$ Euclidean Jordan algebras $%
J_{3}^{\mathbb{R}}$ and $J_{3}^{\mathbb{C}}$, or equivalently in terms of
the hypermultiplets' scalar manifolds of the same theories in $d=4$. In
light of very recent developments on the timelike $d=4\rightarrow d=3$
reduction of $stu$ model (see \textit{e.g.} \cite{d=3-recent}), the
relations (\ref{new-rel-1}), (\ref{new-rel-2}) and (\ref{emb-2}) are
interesting, because $stu$ is a common sector of all symmetric rank-$3$
special K\"{a}hler geometries in $d=4$, as its $d=3$ spacelike (timelike)
reduction is a common sector of the $c$-map ($c^{\ast }$-map) images of all
symmetric rank-$3$ special K\"{a}hler geometries in $d=4$, namely of all
symmetric rank-$4$ (pseudo-)quaternionic K\"{a}hler geometries in $d=3$. In
particular:
\begin{gather}
M_{\mathcal{N}=2,stu,d=4}=\left( \frac{SL\left( 2,\mathbb{R}\right) }{%
U\left( 1\right) }\right) ^{3}  \notag \\
\downarrow c  \notag \\
M_{\mathcal{N}=4,stu,d=3}=\frac{SO\left( 4,4\right) }{SO\left( 4\right)
\times SO\left( 4\right) }\subsetneq \left\{
\begin{array}{l}
M_{\mathcal{N}=4,J_{3}^{\mathbb{R}},d=3}=\mathcal{M}_{\frac{1}{8}-BPS,%
\mathcal{N}=8,d=5}; \\
\\
M_{\mathcal{N}=4,J_{3}^{\mathbb{C}},d=3}=\mathcal{M}_{\frac{1}{8}-BPS,\text{%
large},\mathcal{N}=8,d=4}.
\end{array}
\right.
\end{gather}

\section{\label{Conclusion}Conclusion}

Through a comparison with some other approaches to $\mathcal{N}=8$, $d=4$
\textit{ungauged} supergravity, namely with the ones by Sezgin-van
Nieuwenhuizen \cite{SVN,CFGM1,CFGn-1} and Cremmer-Julia or de Wit-Nicolai
\cite{CJ,dWN,CFGM1}, the main properties of the various symplectic frames
can be summarized as follows.

\begin{enumerate}
\item  The Sezgin-van Nieuwenhuizen \cite{SVN,CFGM1,CFGn-1} construction has
$USp(8)\subsetneq _{symm}^{\max }SU(8)$ as maximal compact subgroup, which
is nothing but the maximal compact subgroup of the $\mathcal{N}=8$, $d=5$ $U$%
-duality group $E_{6\left( 6\right) }$: $USp(8)=mcs\left( E_{6\left(
6\right) }\right) $. By recalling the explicit form of ``large'' non-BPS
charge orbit in $\mathcal{N}=8$, $d=4$ supergravity \cite{FG,LPS-1}:
\begin{equation}
\mathcal{O}_{nBPS}=\frac{E_{7\left( 7\right) }}{E_{6\left( 6\right) }},
\label{Q-nBPS-large}
\end{equation}
it is easy to realize that this is the natural context in which ``large''
dyonic non-BPS $d=4$ extremal black holes can be treated \cite{CFGM1,CFGn-1}.

\item  The Cremmer-Julia or de Wit-Nicolai \cite{CJ,dWN,CFGM1}
parametrization privileges the subgroup $SO(8)=mcs\left( SL\left( 8,\mathbb{R%
}\right) \right) \subsetneq _{symm}^{\max }E_{7(7)}$, providing a natural
context in which $\frac{1}{8}-$BPS ``large'' extremal $d=4$ black holes can
be treated (in particular, in the Lie algebra approach to the scalar manifold, with all
scalars vanishing at the horizon \cite{CFGM1}).

\item  In the Iwasawa construction performed in the present paper, we
started from a realization of $E_{7(7)}$ (Adams' approach, Subsect. \ref
{Lie-Algebra}) with a natural underlying $SL(8,\mathbb{R})$ symmetry.
However, this symmetry is soon broken explicitly down to $SL(7,\mathbb{R}%
)\times SO(1,1)$ by the selection of a Cartan subalgebra (see Eq. (\ref{br-1}%
)). This is essentially due to the maximally non-compact nature of the
Cartan subalgebra $\mathcal{C}$ (\ref{C-def}) of $\frak{e}_{7\left( 7\right)
}$, used in our construction. Therefore, the maximal \textit{off-shell}
symmetry of the whole $\mathcal{N}=8$, $d=4$ Lagrangian density is actually $%
SL(7,\mathbb{R})\times SO(1,1)$, with $SO(7)$ as maximal compact subgroup
(with symmetric embedding). Since
\begin{equation}
mcs\left( SL(7,\mathbb{R})\times SO(1,1)\right) =SO(7)\subsetneq
_{symm}^{\max }SO\left( 8\right) =mcs\left( SL(8,\mathbb{R})\right) ,
\end{equation}
it is natural to expect the residual ``degeneracy'' symmetry of $\frac{1}{8}$%
-BPS attractors in the considered Iwasawa parametrization to be a proper
subgroup of $U(1)_{\mathcal{A}}$ (see Eq. (\ref{U(1)_A})). As proved in
Subsect. \ref{Breaking}, this is indeed the case, because the Adams-Iwasawa
construction performed in the present paper induces the breaking\footnote{%
It is worth mentioning that there are essentially two distinct embeddings of
$\mathbb{Z}_{4}\hookrightarrow U(1)$ given by
\begin{equation*}
\mu \mapsto \pm i,
\end{equation*}
where $\mu $ is a primitive generator of $\mathbb{Z}_{4}$.\newline
}
\begin{equation}
U\left( 1\right) _{A}\longrightarrow \mathbb{Z}_{4}.  \label{Breaking-expl}
\end{equation}
In light of previous reasonings, this breaking can also be traced back to
the ``real selection rule'' inherited from the Cayley transformation $SL(8,%
\mathbb{R})\mapsto SU(8)$ (recall (\ref{our-Cayley}) and (\ref{V-call})),
thus giving rise only to $\frac{1}{8}-$BPS ``large'' non-dyonic extremal $%
d=4 $ black holes, represented by solutions (\ref{el-sol}) and (\ref
{magn-sol}).
\end{enumerate}

The origin of $\mathbb{Z}_{4}$ of (\ref{Breaking-expl}) can also be
explained as follows. The breaking (\ref{br-1}) is clearly incompatible with
the branching
\begin{equation}
SL(8,\mathbb{R})\rightarrow SL(6,\mathbb{R})\times SL(2,\mathbb{R})\times
SO(1,1),  \label{brr}
\end{equation}
unless the $SL(7,\mathbb{R})$ in (\ref{br-1}) breaks down as
\begin{equation}
SL(7,\mathbb{R})\rightarrow SL(6,\mathbb{R})\times \Gamma ,
\end{equation}
where $\Gamma $ is some subgroup commuting with $SL(2,\mathbb{R})\times
SO(1,1)$ in (\ref{brr}). Indeed, by performing the Cayley transformation (%
\ref{our-Cayley-2}), the result (\ref{Gamma-expl}) has been achieved.

As mentioned in the Introduction, the embedding analysis of \cite{CFGM1} has
pointed out that the Lie algebra approach to $\frac{1}{8}$-BPS attractors is related
to the embedding of the Reissner-N\"{o}rdstrom extremal BH solution of
\textit{pure} $\mathcal{N}=2$, $d=4$ supergravity into $\mathcal{N}=8$
theory itself. Consequently, the $U\left( 1\right) \rightarrow \mathbb{Z}%
_{4} $ breaking (\ref{Breaking-expl}), due to the constraints on manifest
\textit{off-shell} covariance originated from the explicit Adams-Iwasawa
construction of the coset representative of $\frac{E_{7\left( 7\right) }}{%
SU\left( 8\right) }$ performed in the present paper, can be nicely
interpreted as a breaking of the global $U\left( 1\right) $ $\mathcal{R}$%
-symmetry \cite{FSZ} of the \textit{pure} $\mathcal{N}=2$, $d=4$ theory.

\section*{Acknowledgments}

We would like to thank Sergio Ferrara and Bert Van Geemen for
enlightening discussions. S. L. C. would like to thank Francesco
Dalla Piazza for providing a first version of the Adams construction
of the group $E_{7(7)}$.

B. L. C. would like to thank the University of California Berkeley
Center for Theoretical Physics, where part of this work was done,
for kind hospitality and stimulating environment.

A. M. would like to thank the Department of Mathematics,
Universit\`{a} degli Studi di Milano, Italy, where part of this work
was done, for warm hospitality and inspiring environment.

The work of B. L. C. has been supported in part by the European Commission
under the FP7-PEOPLE-IRG-2008 Grant n PIRG04-GA-2008-239412 \textit{``String
Theory and Noncommutative Geometry''} (\textit{STRING}).

The work of A. M. has been supported by an INFN visiting Theoretical
Fellowship at SITP, Stanford University, Stanford, CA, USA.

\end{document}